\documentclass[preprint,twocolumn,10pt,amsmath,amssymb,aps,superscriptaddress,floatfix]{revtex4-2}
\usepackage{hyperref}

 \usepackage{nameref}
\usepackage[T1]{fontenc}
\usepackage[utf8]{inputenc}
\usepackage[english]{babel}
\usepackage[margin=1.5cm]{geometry}
\usepackage{float}
\usepackage{fancyhdr}
\usepackage{soul}
\usepackage{graphicx}
\usepackage{float}
\usepackage{caption}
\usepackage{subcaption}
\usepackage{amsmath, amssymb, amsfonts, amsthm, stmaryrd}
\usepackage{enumitem}
\usepackage{tikz}
\usepackage{indentfirst}
\usetikzlibrary{positioning, matrix, arrows.meta}
\usepackage{placeins}
\usepackage{csquotes}
\usepackage[toc,page,title]{appendix}
\usepackage{csquotes}

\newtheorem{theorem}{Theorem}

\newtheorem*{remark}{Remark}

\newtheorem{proposition}{Proposition}

\begin{document}
\setlength{\parindent}{1em}

\title{Sandpile Models on complex networks}

\author{Komlan Fiagbe}
\email{komlan.fiagbe@unamur.be}
\affiliation{Department of Mathematics \& Namur Institute for Complex Systems - naXys, University of Namur, Rue Joseph Grafé 2, 5000 Namur, Belgium}

\author{Jean-François de Kemmeter}
\affiliation{Department of Mathematics, Royal Military Academy, Brussels, Belgium.}

\author{Timoteo Carletti}
\affiliation{Department of Mathematics \& Namur Institute for Complex Systems - naXys, University of Namur, Rue Joseph Grafé 2, 5000 Namur, Belgium}


\begin{abstract}
We investigate the sandpile model on complex networks by developing a branching-process framework that explicitly incorporates dissipation during avalanche propagation. Unlike classical branching descriptions, which assume conservative transport and locally tree-like independence, the present approach introduces grain-loss effects directly into the offspring distribution, yielding generalized generating functions for dissipative avalanche dynamics. In the dissipative regime, avalanche-size distributions acquire exponential cutoffs while preserving topology-dependent scaling behavior. Numerical simulations confirm the theoretical predictions on sparse random networks and reveal systematic deviations in highly structured topologies. In particular, by using Holme–Kim clustered scale-free networks, we show that increasing clustering continuously lowers the avalanche exponent and enhances the probability of large cascades, demonstrating that short cycles generate strong correlations that invalidate the classical independent-branch approximation. Surprisingly, trees also exhibit substantial deviations from power-law because low edge density and the abundance of leaves constrain avalanche propagation. These results show that dissipation, clustering, and sparse connectivity fundamentally reshape avalanche size distribution of the sandpile model on networks and establish quantitative limits for branching-process descriptions of avalanche dynamics.
\end{abstract}

\maketitle

\section{Introduction}
\label{sec:intro}
The sandpile model was introduced by Bak, Tang, and Wiesenfeld \cite{Article09,Article10} (BTW) in the context of self-organized criticality (SOC). Grains are slowly added to the nodes of a lattice; when the number of grains at a node exceeds a given threshold, the node topples and redistributes grains to its neighbors, possibly triggering cascades of topplings, namely to create an {\em avalanche}. The BTW model became a fundamental framework for studying critical phenomena in statistical physics.
Initially defined on regular Euclidean lattices \cite{Article09,Article10,Article07}, the model exhibits avalanche-size distributions following power laws, a characteristic signature of criticality. These results motivated the extension of sandpile dynamics to complex networks, where heterogeneous connectivity strongly influences transport and propagation processes. Such networks naturally describe many real systems, including social, transportation, communication, biological, and neural networks~\cite{Babaalabert2002,Strogatz01Nature,BBV2008,newmanbook}.

Because we are dealing with finite size networks, to prevent indefinite accumulation of grains, the original model introduced dissipative boundary conditions allowing grains to leave the system through the lattice boundaries \cite{Article09,Article10,Article07}. Similar mechanisms were later introduced on networks through dissipation during relaxation. The effect of dissipation on avalanche statistics and stationary states has been widely studied on lattices~\cite{Lahtinen2005,Lauritsen1996SelfOrganized,bonabeau1995sandpile}. In general, dissipation drives the system away from criticality and introduces exponential cutoffs in avalanche distributions.
Avalanche propagation has also been investigated on more realistic network structures, including small-world networks \cite{Lahtinen2005}, random networks \cite{bonabeau1995sandpile}, and Bethe lattices \cite{DharMajumdar1990}. One of the main analytical approaches is based on branching-process approximations, where avalanches are modeled as branching trees generated by successive topplings. Using this framework, Goh et al. \cite{goh2003sandpileSF,lee2004branching,lee2004sandpileSF} studied sandpile dynamics on scale-free networks and showed that avalanche statistics depend explicitly on the exponent $\gamma$ of the probability for a generic node to have degree $k$, $p_d(k)\sim 1/k^\gamma$. Their results highlighted the important role of hubs and demonstrated that heterogeneous topologies can modify the universality class of the dynamics. However, these studies neglected dissipation in the branching-process formulation.

The present work extends these previous studies in several directions. First, we develop a branching-process description of sandpile dynamics on complex networks that explicitly incorporates dissipation into the offspring distribution. This leads to generalized analytical formula for avalanche-size distributions in the dissipative regime. Second, we derive explicit expressions for avalanche statistics on several classes of networks, including scale-free, Erdős–Rényi, and random regular graphs, by using generating functions and asymptotic methods. In particular in the case of scale-free networks we prove that dissipation modifies the exponent of the power law distribution of avalanche sizes. Finally, we investigate the impact of small cycles and clustering on avalanche distribution. While classical branching-process approaches assume locally tree-like structures, our numerical results on clustered scale-free networks show that short cycles significantly modify avalanche statistics and increase the probability of large cascades. We also study avalanche dynamics on lattices, random regular graphs, Cayley trees, and Barabási–Albert trees, showing that tree-like structures do not always follow the ideal power-law behavior predicted by standard branching theories.

The work is organized as follows. In Section~\ref{sec1} we introduce the sandpile model on network and the main results we will apply in Section~\ref{sec:application}, after having introduced the methodology used in the numerical simulation in Section~\ref{sec:nummeth}. Before to conclude in Section~\ref{sec:conclu}, we study of the impact of cycles in Section~\ref{sec:cycles}.

\section{Sandpile Model on a Network}
\label{sec1}
\indent Let us now present the rules of the BTW sandpile model defined on a network~\cite{goh2003sandpileSF,bonabeau1995sandpile,lee2004sandpileSF}. Consider thus a simple, undirected and connected network of $N$ nodes. At each discrete time step, one grain is added to a node chosen uniformly at random. Thus, node $i$ receives a grain with probability $1/N$, and its height $h_i$, i.e., the number of grains currently on node $i$, evolves as
\begin{equation*}
    h_i \longrightarrow h_i + 1 
    \quad \text{with probability } \frac{1}{N}\, .
\end{equation*}
 Let $k_i$ be the degree of node $i$, i.e., it is connected to $k_i$ nodes. We assume the maximal number of grains that node $i$ can hold is $k_i-1$. When $h_i$ reaches this threshold, the node becomes unstable and topples: it loses $k_i$ grains, each of which is transferred to one distinct neighbor. The toppling rule can be thus written as
\begin{align*}
    h_i &\longrightarrow h_i - k_i, \\
    h_j &\longrightarrow h_j + 1 
    \quad \text{for every neighbor } j \text{ of } i\, .
\end{align*}
Because the networks have finite size, to prevent an unbounded accumulation of grains in the system, we further assume that each transferred grain has a probability $0<f<1$ of being dissipated, i.e., being removed from the system. 

A single toppling may cause one or more neighboring nodes to become unstable, inducing further topplings. This may propagate through the network as a cascade of relaxations, called an {\em avalanche}. The latter continues until all nodes are again stable, i.e., $h_i < k_i$ for all $i$. In the following we are interested in characterizing how the topology of the underlying network affects the statistics of avalanche sizes, namely the total number of topplings during an avalanche.

\subsection{Branching Process Method}
We now introduce the branching process allowing to model an avalanche as a random tree generated by the successive topplings~\cite{lee2004branching}. In the stationary regime, a node of degree $k$ may have any height
\[
 h_i \in \{0,1,\dots,k-1\}\, .
\]
By assuming those values to be equally likely, the probability that a node of degree $k$ is just at the threshold of instability, i.e., that $h_i = k-1$, is simply $1/k$.

Let $q_k$ denote the probability that a node generates exactly $k$ branches in the avalanche tree, i.e., namely its toppling causes $k$ of its neighbors to topple. This happens precisely if the node has degree $k$, and
   just before receiving a grain, its height is $k-1$.
Hence,
\begin{equation}\label{SEC1EQ1}
    q_k = \frac{\mathbf{P}(k)}{k}\, ,
\end{equation}
where $\mathbf{P}(k)$ is the probability that a randomly chosen grain arrives at a node of degree $k$, the latter quantity is called {\em excess degree}~\cite{newmanbook}. If we choose an edge at random, the probability that it leads to a node of degree $k$ is proportional to $k$. Thus,
\begin{equation}\label{SEC1EQ2}
    \mathbf{P}(k) = \frac{k\, p_d(k)}{\displaystyle \sum_{j \ge 1} j\, p_d(j)}\, ,
\end{equation}
where $p_d(k)$ is the degree distribution of the network.
Combining~\eqref{SEC1EQ1} and~\eqref{SEC1EQ2} gives
\begin{equation}\label{SEC1EQ3}
    q_k = \frac{p_d(k)}{\displaystyle \sum_{j \ge 1} j\, p_d(j)}\,.
\end{equation}

Let $Q(w)$ be the probability generating function of the offspring distribution $\{q_k\}_{k \ge 0}$
\begin{equation}
    Q(w) = \sum_{k=0}^{\infty} q_k\, w^k,
    \qquad |w| \le 1\, ,
\end{equation}
and let $S$ denote the total size of an avalanche, i.e., the total number of nodes that topple in the cascade. We define the generating function of the avalanche size distribution as
\begin{equation}
    P(y) = \sum_{s=1}^{\infty} p(s)\, y^s,
    \qquad |y| \le 1\, ,
\end{equation}
where $p(s) = \mathbb{P}(S = s)$, i.e., the probability for an avalanche to have size $S=s$ for some $s\geq 1$. The latter quantities are related by the fundamental functional relation (a detailed derivation can be found in Appendix~\ref{app1}):
\begin{equation}\label{SEC1EQ4}
    P(y) = y\, Q\bigl(P(y)\bigr)\,.
\end{equation}

By studying the singular behavior of $P(y)$ near $y=1$, given the singular behavior of $Q(w)$ near $w=1$, we will determine the asymptotics behavior of $p(s)$ for large avalanche sizes $s$~\cite{lee2004sandpileSF}. The functional equation~\eqref{SEC1EQ4} can be formally inverted to return
\begin{equation}\label{SEC1EQ5}
    y = \frac{w}{Q(w)},
\end{equation}
where we set $w = P(y)$. Knowing the offspring distribution $\{q_k\}$, we can compute $Q(w)$, expand it for $w$ close to $1$, and deduce the corresponding expansion of $P(y)$ as $y \uparrow 1$. Finally, the coefficients $p(s)$ can be extracted via Cauchy's integral formula,
\begin{equation}\label{SEC1EQ6}
     p(s) = \frac{1}{2\pi i} \oint_{C} P(y)\, y^{-s-1}\, dy,
\end{equation}
where $C$ is a contour encircling the origin $y=0$ but not crossing the branch cut $[1,\infty)$.

\begin{remark}
The branching representation crucially relies on the assumption that the subtrees generated by different children of a node are independent and identically distributed. This is a strong simplification and is in general not exactly satisfied, even on tree-like graphs (networks without cycles). Nevertheless, the approximation is accurate for 'sufficiently random' networks~\cite{bonabeau1995sandpile}, in particular those that do not contain too many short cycles like random networks that are locally tree-like.
\end{remark}

Let us observe that Eq.~\eqref{SEC1EQ4} has been obtained without considering dissipation, we now thus fill this gap by taking into account the probability $f$ to lose one toppling grain at each transfer, and thus with probability $1-f$ the grain is successfully transmitted. The probability that a node generates $k$ branches depends now on $f$ and reads
\begin{align}
\label{SEC1EQ7}
    q_k(f)=\sum_{m=k}^{\infty}q_m\binom{m}{k}(1-f)^{k}f^{m-k} ,
\qquad k \ge 1
\end{align}
and 
\begin{align*}
q_0(f)
=1+\sum_{k=1}^{\infty}
q_k(
f^{k}-1)\, ,
\end{align*}
where $q_m$ denotes again the probability that a node generates $k$ branches in the avalanche tree~\eqref{SEC1EQ3}.

We can thus compute the new probability generating function of the offspring distribution, depending on $f$ 
\begin{align}
    \label{eq:Qwffun}
    Q(w|f)=\sum_{k=0}^{\infty} q_k(f)w^k\, 
\end{align}
and by using the definition~\eqref{SEC1EQ7} one can directly obtain
\begin{align*}
 Q(w|f)=Q((1-f)w+f)\, .
\end{align*} 
See Appendix~\ref{app2} for more details and some examples.

The functional relation relating $Q(w|f)$ and the generating function of the avalanche size distribution, can now be written as
\begin{equation}
\label{eq:PQf}
   P(y)
   = y\, Q\bigl(P(y)|f\bigr)\, ,
\end{equation}
which is a straightforward generalization of Eq.~\eqref{SEC1EQ4}.

Let us observe that when we neglect dissipation, the average number of offsprings produced by a toppling event, is given by
\begin{align*}
    \mu=\sum_{k \ge 0}k q_k=1\, ,
\end{align*}
the process is thus in a critical regime (each individual produces one offspring on average). However once $f>0$ we get (See Appendix~\ref{app2} for more details)
\begin{align}
\label{eq:muf}
    \mu(f)=\sum_{k \ge 0}k q_k(f)=1-f<1\, .
\end{align}
We can thus conclude that we are in a subcritical regime that suggests the cascades to die out exponentially fast because each individual produces less than one offspring on average.

The following Theorems~\ref{SEC1TH1} and ~\ref{thm:lagrange} (see~\cite{flajolet2009analytic}) propose alternative ways, with respect to the one above presented, to determine $p(s)$; they are rooted on a direct analysis of the Taylor expansion of the function $P(y)$. In the following Sections, we will use Theorem~\ref{SEC1TH1} to prove the sought scaling law in presence of dissipation in the case of Erd\H{o}s--R\'enyi and random regular networks; the scale free framework, however, will be solved by using Theorem~\ref{thm:lagrange}  because the former does not apply.

\begin{theorem}\label{SEC1TH1}
Let $\psi(z)$ be analytic at $0$ and satisfy the implicit equation
\[
\psi(z) = z\,\phi(\psi(z)),
\]
where $\phi(u)$ is non-periodic and analytic at $0$ and can be written as power series, $\phi(u)=\sum_{n=0}^{\infty} a_n u^n$, converging in the disk of radius $R>0$. The notation $[u^n]\phi(u)$ denotes the coefficient extraction operator, i.e.,
$[u^n]\phi(u) = a_n$. Assume the following conditions to hold true:

\textbf{(H1) Regularity and positivity.}
\begin{itemize}
    \item $\phi(0) \neq 0$,
    \item $[u^n]\phi(u) \ge 0$ for all $n$ ,
    \item $\phi(u) \not\equiv \phi_0 + \phi_1 u$ (i.e., $\phi$ is not linear).
\end{itemize}

\textbf{(H2) Characteristic equation.}
Within the disk of convergence, $|u|<R$, of $\phi$, there exists a unique
positive solution $\tau$ with $0<\tau<R$ such that
\begin{equation}
\label{eq:roottau}
\phi(\tau) - \tau \phi'(\tau) = 0\, .    
\end{equation}
Let $\rho = \frac{\tau}{\phi(\tau)}$, then $\psi(z)$ is analytic at $0$, moreover
\begin{enumerate}
    \item The radius of convergence of $\psi(z)$ is $\rho>0$.
    \item $\psi(z)$ admits the following square-root singular expansion at $z=\rho$:
    \[
    \psi(z)
    =
    \tau
    -
    \gamma \sqrt{1-\frac{z}{\rho}}
    + \mathcal{O}\!\left(1-\frac{z}{\rho}\right),
    ~~
    \gamma = \sqrt{\frac{2\phi(\tau)}{\phi''(\tau)}}.
    \]
    Moreover, a full locally convergent expansion exists in powers of
    $\sqrt{1-z/\rho}$.
    
    \item The coefficients of $\psi(z)$ satisfy the asymptotic estimate
    \begin{equation}
    \label{eq:psin}
            [z^n]\,\psi(z)
    =
    \sqrt{\frac{\phi(\tau)}{2\phi''(\tau)}}
    \frac{\rho^{-n}}{\sqrt{\pi n^{3}}}
    \left(1 + \mathcal{O}\!\left(\frac{1}{n}\right)\right)\, .
    \end{equation}
\end{enumerate}
\end{theorem}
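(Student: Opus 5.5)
This is the classical smooth implicit-function schema of Flajolet--Sedgewick (Theorem VI.6 in \cite{flajolet2009analytic}). I would prove it in three stages: locate the dominant singularity, derive the square-root expansion there, and transfer that expansion to the coefficients.

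\emph{Analyticity and radius of convergence.} Write $G(z,w)=w-z\phi(w)$. Since $G(0,0)=0$ and $\partial_w G(0,0)=1\neq 0$, the analytic implicit function theorem gives a unique $\psi$ analytic at $0$ with $\psi(0)=0$. The coefficients of $\psi$ are nonnegative; this follows from the Lagrange inversion formula $[z^n]\psi=\frac1n[u^{n-1}]\phi(u)^n$ together with (H1). By Pringsheim's theorem, the radius of convergence $r$ of $\psi$ is then a singularity on the positive real axis.

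On $[0,r)$ the function $\psi$ is real and increasing. By the implicit function theorem it can be continued analytically as long as $1-z\phi'(\psi(z))\neq 0$ and $\psi(z)<R$. Along the curve we have $z=\psi/\phi(\psi)$. The function $h(u)=u/\phi(u)$ satisfies $h'(u)=(\phi-u\phi')/\phi^2$. Convexity of $\phi$ (nonlinear, with nonnegative coefficients) makes $\phi-u\phi'$ strictly decreasing, so $h$ increases on $[0,\tau)$ and has a critical point exactly at $\tau$. Hence $\psi$ stays analytic up to $z=h(\tau)=\rho$, where $\psi(\rho)=\tau$ and $1-\rho\phi'(\tau)=0$. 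Since $\psi$ cannot be continued analytically past a point where $\partial_wG$ vanishes while $\partial_w^2G\neq0$, the singularity is $\rho$ and $r=\rho$.

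\emph{Square-root expansion.} Near $(\rho,\tau)$, expand $z=h(w)$ to second order:
\[
\rho-z=-\tfrac12 h''(\tau)(w-\tau)^2+O\bigl((w-\tau)^3\bigr),
\]
and compute $h''(\tau)=-\tau\phi''(\tau)/\phi(\tau)^2$. This gives
\[
(w-\tau)^2\sim \frac{2\phi(\tau)^2}{\tau\phi''(\tau)}(\rho-z)=\frac{2\phi(\tau)}{\phi''(\tau)}\Bigl(1-\frac z\rho\Bigr),
\]
using $\rho=\tau/\phi(\tau)$. Taking the branch with $w<\tau$ for $z<\rho$ yields $\gamma=\sqrt{2\phi(\tau)/\phi''(\tau)}$. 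For the full expansion in powers of $\sqrt{1-z/\rho}$, apply the analytic inverse function theorem to $\sqrt{\rho-h(w)}$, which is analytic with nonzero derivative at $\tau$.

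\emph{Coefficient asymptotics.} This is the main obstacle: one must show that $\rho$ is the \emph{only} singularity on $|z|=\rho$ and that $\psi$ extends to a $\Delta$-domain. Here aperiodicity is essential. For $|z|=\rho$ with $z\neq\rho$, the triangle inequality is strict, $|\psi(z)|<\psi(\rho)=\tau$, because $\psi$ is aperiodic whenever $\phi$ is. Then $|z\phi'(\psi(z))|<\rho\phi'(\tau)=1$, so the implicit function theorem continues $\psi$ analytically past each such point. By compactness, $\psi$ is analytic in a $\Delta$-domain at $\rho$.

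Finally, apply the transfer theorem:
\[
[z^n]\bigl(-\gamma\sqrt{1-z/\rho}\bigr)=\frac{\gamma\rho^{-n}}{2\sqrt{\pi n^3}}\bigl(1+O(1/n)\bigr),
\]
while the $O(1-z/\rho)$ terms contribute only $O(\rho^{-n}n^{-5/2})$. Since $\gamma/2=\sqrt{\phi(\tau)/(2\phi''(\tau))}$, this gives \eqref{eq:psin}.
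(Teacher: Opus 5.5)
Your proposal is correct and is essentially the standard proof of Theorem VI.6 in the cited book of Flajolet and Sedgewick, which is all the paper relies on here (it states the result with a citation and gives no proof of its own). One small wording fix: what rules out analyticity of $\psi$ at $\rho$ is $\partial_z G(\rho,\tau)=-\phi(\tau)\neq 0$, since differentiating $G(z,\psi(z))=0$ at an analytic point would force $\partial_z G$ to vanish there; $\partial_w^2G\neq0$ alone does not do this, although your square-root expansion with $\gamma\neq0$ already settles the point.
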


\begin{theorem}[Lagrange Inversion Theorem]
\label{thm:lagrange}
Let
\[
\phi(u)=\sum_{k\ge 0}\phi_k u^k
\]
be a power series in \(\mathbb{C}[[u]]\) with \(\phi_0\neq 0\).
Then the equation
\[
\psi(z)=z\phi(\psi(z))
\]
admits a unique solution \(\psi(z)\in\mathbb{C}[[z]]\), whose coefficients are given by
(Lagrange form)
\begin{equation}
\psi(z)=\sum_{n=1}^{\infty} \psi_n z^n,
\qquad
\psi_n=\frac{1}{n}[u^{n-1}]\phi(u)^n.
\end{equation}
\end{theorem}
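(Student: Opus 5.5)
The plan is to prove the Lagrange Inversion Theorem in two steps, treating everything as formal power series so that no convergence arguments are needed. First comes existence and uniqueness, then the coefficient formula via formal residues.

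\textbf{Existence and uniqueness.} Write $\psi(z)=\sum_{n\ge1}\psi_n z^n$; the factor $z$ on the right-hand side forces $\psi_0=0$. Compare coefficients of $z^n$ in $\psi=z\phi(\psi)$. The coefficient $\psi_n$ equals $[z^{n-1}]\phi(\psi(z))$. Since $\psi$ has no constant term, this coefficient depends only on $\psi_1,\dots,\psi_{n-1}$ and on $\phi_0,\dots,\phi_{n-1}$. This gives a recursion determining each $\psi_n$ uniquely, starting from $\psi_1=\phi_0$. Hence a unique solution exists in $\mathbb{C}[[z]]$. Because $\phi_0\neq0$, we have $\psi_1\neq0$, so $\psi$ is compositionally invertible. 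Its inverse is $z=u/\phi(u)$, which is a legitimate formal series because $\phi$ is invertible in $\mathbb{C}[[u]]$.

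\textbf{Coefficient formula.} I would work in formal Laurent series and use the formal residue $\operatorname{Res}_z F=[z^{-1}]F$. Two standard facts are needed:
\begin{itemize}
\item $\operatorname{Res}\,F'=0$ for every Laurent series $F$, since derivatives have no $z^{-1}$ term.
\item The change-of-variables rule $\operatorname{Res}_z F(z)=\operatorname{Res}_u F(g(u))g'(u)$ for a series $g$ with $g(0)=0$ and $g'(0)\neq0$.
\end{itemize}
The change-of-variables rule is the main obstacle. By linearity it suffices to check it on monomials $F=z^k$. For $k\neq-1$, the series $g^kg'=(g^{k+1})'/(k+1)$ is a derivative, so its residue vanishes, as required. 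For $k=-1$, write $g=cu(1+h(u))$ with $c\neq0$ and $h(0)=0$. Then $g'/g=1/u+h'/(1+h)$, and the second term is a power series, so the residue is $1$, as required.

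With these facts in hand, for $n\ge1$ compute
\[
n\psi_n=\operatorname{Res}_z\, \psi'(z)\,z^{-n}.
\]
Now substitute $z=g(u)=u/\phi(u)$, so that $\psi(g(u))=u$. Use the chain rule $\psi'(g(u))g'(u)=1$. The change-of-variables rule then gives
\[
n\psi_n=\operatorname{Res}_u\, g(u)^{-n}=\operatorname{Res}_u\,\phi(u)^n u^{-n}=[u^{n-1}]\phi(u)^n .
\]
Dividing by $n$ yields $\psi_n=\frac1n[u^{n-1}]\phi(u)^n$, which is the stated formula.
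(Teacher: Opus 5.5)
Your proof is correct and complete. The paper gives no proof of its own to compare against: it quotes this theorem from \cite{flajolet2009analytic} and uses it only to obtain $p(s)=\frac1s[w^{s-1}]Q(w|f)^s$. For that use, a purely formal statement like yours is exactly what is needed.

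Your argument is the formal-residue version of the classical analytic proof. In the analytic proof one writes $n\psi_n=\frac{1}{2\pi i}\oint \psi'(z)\,z^{-n}\,dz$ and substitutes $z=u/\phi(u)$ in the contour integral, which gives $\frac{1}{2\pi i}\oint \phi(u)^n u^{-n}\,du$. The computation is the same as yours; what differs is what has to be justified.

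\begin{itemize}
\item The analytic route gets the substitution rule free from calculus. However, it needs $\phi$, and hence $\psi$, to be analytic at $0$. For an arbitrary element of $\mathbb{C}[[u]]$ one must therefore first reduce to, say, polynomial $\phi$. That reduction is legitimate because both $\psi_n$ (by your recursion) and $[u^{n-1}]\phi(u)^n$ depend only on $\phi_0,\dots,\phi_{n-1}$.
\item Your route works directly in $\mathbb{C}[[u]]$. It avoids both that reduction and the analytic implicit function theorem. The price is proving $\operatorname{Res}_z F=\operatorname{Res}_u F(g(u))\,g'(u)$, which you do correctly.
\end{itemize}

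Two steps you pass over quickly are sound but deserve a line each.
\begin{itemize}
\item Reducing the substitution rule to monomials is fine for an infinite Laurent series $F$. The power-series part of $F$ contributes $\sum_{k\ge0}F_k\,g^k g'$, which is itself a power series with zero residue. So only the finitely many negative-degree terms matter.
\item The identity $\psi(g(u))=u$ needs a word. You have $g(\psi(z))=\psi(z)/\phi(\psi(z))=z$ directly, and the converse follows because a left compositional inverse of a series of order one is also a right inverse.
\end{itemize}
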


\section{Numerical simulations: Methodology}
\label{sec:nummeth}
The aim of this Section is to present the methodology we used to perform and display the numerical results. In the simulations, dissipation is controlled through the probability \(f\) that a toppling grain is removed from the system during the transfer. This parameter is crucial because it determines whether the model reaches a well-defined stationary regime. If \(f\) is chosen too large, dissipation dominates the transport and avalanches are prematurely interrupted, which suppresses large events and distorts the scaling region. Conversely, if \(f\) is too small, grains accumulate faster than they can be evacuated, and the system may exhibit an overloaded (or weakly stationary) behavior where extremely large avalanches become more frequent and finite-size saturation effects appear early. For this reason, \(f\) must be tuned to the network size: larger networks can typically sustain smaller dissipation while still remaining stable, whereas smaller networks require stronger dissipation to avoid excessive loading. Finally, to observe avalanche dynamics close to the critical regime, we choose the dissipation probability $f$ sufficiently small so that the system remains near the critical regime~\eqref{eq:muf}.

To ensure good statistical sampling, we run each simulation for a number of driving steps proportional to the network size. In practice, we use a total number of steps equal to \(20\times |E|\), where \(|E|\) is the number of edges of the network. This choice is large enough to let transient effects die out, collect a large number of avalanches spanning several decades in size, and obtain stable estimates of the avalanche-size distribution and its exponent. We run the avalanche simulations $10$ times to obtain $10$ different samples of avalanche sizes. We merge the $10$ samples into a single sample. Throughout the remainder of this work, we keep the same protocol so that the resulting distributions can be compared consistently across different network topologies and values of the control parameters.

For the fitting procedure, we employ the python power-law package \cite{Article19}, which is based on the statistical framework developed in \cite{Article20,Article21}. The fitting range is determined independently for each dataset in order to ensure the validity of the power-law approximation, since empirical distributions may deviate from ideal scaling behavior outside restricted regimes. The natural lower bound of the support is $x_{\min}=1$. However, in some cases this choice leads to systematic deviations from power-law behavior in the lower range of the data. Although the lower cutoff can be estimated by using the Kolmogorov--Smirnov minimization procedure, finite-size effects may result in an insufficient number of observations in the tail when the automatically selected $x_{\min}$ is too large. Therefore, we adopt $x_{\min}=1$ whenever it is consistent with the data; otherwise, the lower bound is increased to the smallest value for which the estimated scaling exponent remains stable and the diagnostic distances indicate a close agreement between the empirical distribution
and the fitted model.
The upper bound of the fitting range is chosen manually as the largest value for which scaling behavior is observed before clear deviations from the power-law trend appear in the empirical distribution.
To further assess the validity and consistency of the fitted model, we
use the  Kolmogorov--Smirnov $KS$, Kuiper $V$ and Anderson--Darling statistics 
\cite{Kuiper1960,Stephens1970,Anderson1952,Stephens1974,Article20,Article21}. The KS statistic is
widely used in power-law analyses because it quantifies the largest
discrepancy between the observations and the theoretical distribution,
thereby providing a direct measure of goodness-of-fit. The Kuiper
statistic complements the $KS$ distance by evaluating discrepancies across
the entire support of the distribution rather than focusing only on the
largest local deviation. The Anderson--Darling statistic provides a more
stringent criterion by placing greater emphasis on the tails of the
distribution, which is particularly relevant for heavy-tailed phenomena
such as power-laws.
We report the normalized $\kappa$ coefficient implemented in the
the python power-law package \cite{Article19}. Unlike the previous
statistics, $\kappa$ is not a standard statistic that asserts the goodness-of-fit from the literature but rather a diagnostic quantity that summarizes
the average discrepancy between the empirical and fitted distributions.
Values close to unity indicate the absence of systematic overestimation or
underestimation by the fitted model. The power-law fit is therefore interpreted as consistent with the data
when the fitted exponent remains stable, the $KS$ and Kuiper distances
remain below $10^{-1}$, the normalized Anderson--Darling ($A^2$) distance remains
below $10^{-2}$, and the $\kappa$ coefficient remains within $10^{-2}$ of
unity. We use these coefficients to assess the quality and consistency of the fits associated with the other distributions encountered throughout this work. Details of how those coefficients are computed are provided in Appendix.~\ref{app7}.
\section{Applications to different networks}
\label{sec:application}
Let us consider now several applications of the above presented theory to several relevant classes of complex networks.

\subsection{Scale-Free Network}
\label{ssec:SF}
In a scale-free network, the degree distribution follows a power law
\begin{equation}
   p_d(k) \sim k^{-\gamma},
\end{equation}
with $2 < \gamma < 3$ in many empirical networks~\cite{Babaalabert1999,newmanbook}. Such networks contain hubs (nodes of very large degree), which strongly affect the propagation of avalanches.
Let us first assume that the minimum degree is $k_{\min}=1$. By using the definition~\eqref{SEC1EQ3}, we obtain
\begin{equation}\label{SEC2EQ1}
 q_k = \frac{k^{-\gamma}}{\zeta(\gamma-1)},
 \qquad k \ge 1\, ,
\end{equation}
where $\zeta(\gamma) = \sum_{n=1}^{\infty} n^{-\gamma}$ is the Riemann zeta function. The probability $q_0$ is then determined by
\begin{equation}
     q_0
     = 1 - \sum_{k=1}^{\infty} \frac{k^{-\gamma}}{\zeta(\gamma-1)} \notag = 1 - \frac{\zeta(\gamma)}{\zeta(\gamma-1)}\, .
     \label{app4eq50}
\end{equation}
Hence the generating function $Q(w)$ can be written as
\begin{equation*}
    Q(w)
      = q_0 + \sum_{k=1}^{\infty} \frac{k^{-\gamma}}{\zeta(\gamma-1)}\, w^{k}\,.
\end{equation*}
It is convenient to express the latter expression in terms of the polylogarithm function~\cite{robinson1951bose,wood1992polylogarithms}, defined for $|w| < 1$ by
\begin{equation}
\label{eq:polylog}
    \mathrm{Li}_s(w) = \sum_{k=1}^{\infty} \frac{w^k}{k^s}
    = \frac{1}{\Gamma(s)} \int_0^{\infty} \frac{y^{s-1}\,dy}{\frac{e^y}{w} - 1},
\end{equation}
where $\Gamma(\cdot)$ denotes the Gamma function. In this way one can obtain
\begin{equation}
\label{eq:Qw}
     Q(w) = q_0 + \frac{\mathrm{Li}_{\gamma}(w)}{\zeta(\gamma-1)}\, .
\end{equation}
Goh et al.~\cite{goh2003sandpileSF} have determined the asymptotic expansion of $p(s)$ for large $s$, by analyzing the singular behavior of $Q(w)$ at $w=1$. More precisely, they used the Bose-Einstein type expansion~\cite{robinson1951bose}, for $s \in \mathbb{C} \setminus \mathbb{N}$,  the standard singularity analysis~\cite{flajolet1988singularity} and the property of the Lambert $W$-function \cite{Corless1996LambertW}, to show that
\begin{equation}\label{SEC2EQ2}
p(s) \sim
\begin{cases}
a(\gamma)\, s^{-\gamma/(\gamma-1)}, & 2 < \gamma < 3, \\[1ex]
b\, s^{-3/2} \bigl[\ln s\bigr]^{-1/2}, & \gamma = 3, \\[1ex]
c(\gamma)\, s^{-3/2}, & \gamma > 3,
\end{cases}
\qquad (s \to \infty)\, ,
\end{equation}
where $b=\sqrt{\pi/6}$, $c(\gamma)=\sqrt{\zeta(\gamma-1)/(2\pi\zeta(\gamma-2))}$ and   $a(\gamma)=-((\Gamma(1-\gamma)/\zeta(\gamma-1))^{1/1-\gamma})/(\Gamma(1/(1-\gamma)))$  are positive constants. Details about this derivation are given in the appendix~\ref{app4}. 

When dissipation is taken into account the above relation~\eqref{eq:Qw} can be generalized to obtain
\begin{equation}
\label{eq:QfSF}
     Q(w|f)
=
1 - \frac{\zeta(\gamma)}{\zeta(\gamma-1)}
+ \frac{\mathrm{Li}_{\gamma}\big(u(w)\big)}{\zeta(\gamma-1)},
\end{equation}
where $u(w) = (1-f)w + f$ (see Appendix~\ref{app2}).

In this case, the previous approach~\cite{goh2003sandpileSF} cannot be applied nor Theorem~\ref{SEC1TH1} because hypothesis (H2) is not satisfied, indeed the root $\tau$ of Eq.~\eqref{eq:roottau}, where $\phi(u)=Q(u|f)$, does not belong to the convergence disk.  We will thus resort to Theorem~\ref{thm:lagrange}. The latter allows to conclude that
\begin{equation}\label{SEC3EQ1}
  p(s)=\frac{1}{s}[w^{s-1}] Q(w|f)^s.
\end{equation}
Posing
\[
A=1-\frac{\zeta(\gamma)}{\zeta(\gamma-1)}\,\,,\,
B=\frac{1}{\zeta(\gamma-1)}\text{ and }
L(w)=\mathrm{Li}_{\gamma}(u(w)) \, ,
\]
we can rewrite~\eqref{eq:QfSF} as  $Q(w|f)=A+B\,L(w)$. Thus
\begin{equation}
\label{eq:Qwfpows}
Q(w|f)^s =(A+BL(w))^s=\sum_{m=0}^{s}\binom{s}{m}A^{s-m}(B L(w))^m\, .
\end{equation}
By using the definition of polylogarithm~\eqref{eq:polylog} we get
then
\begin{eqnarray*}
L(w)^m
=
\left(\sum_{k\ge 1}\frac{u(w)^k}{k^\gamma}\right)^m &=&
\sum_{k_1\ge 1}\cdots\sum_{k_m\ge 1}
\frac{u(w)^{k_1+\cdots+k_m}}{k_1^\gamma\cdots k_m^\gamma}\\
&=&\sum_{k_1,\dots,k_m\ge 1}
\frac{u(w)^K}{k_1^\gamma\cdots k_m^\gamma}\, ,
\end{eqnarray*}
where we defined $K=k_1+\cdots+k_m$. By using the binomial expansion
\[
u(w)^K
=
\bigl((1-f)w+f\bigr)^K
=
\sum_{n=0}^{K}\binom{K}{n}(1-f)^n f^{K-n}w^n\, ,
\]
we can conclude that
\[
[w^{s-1}]u(w)^K
=
\binom{K}{s-1}(1-f)^{s-1}f^{K-s+1}\, ,
\]
valid when $K\ge s-1$.
Therefore
\[
[w^{s-1}]L(w)^m
=
\sum_{k_1,\dots,k_m\ge 1}
\frac{[w^{s-1}]u(w)^K}{k_1^\gamma\cdots k_m^\gamma}\, ,
\]
and eventually
\[
[w^{s-1}]L(w)^m
=
(1-f)^{s-1}
\sum_{k_1,\dots,k_m\ge 1}
\frac{\binom{K}{s-1}f^{K-s+1}}{k_1^\gamma\cdots k_m^\gamma}
\mathbf{1}_{\{K\ge s-1\}}\, .
\]
Now, by using Eqs.~\eqref{SEC3EQ1} and~\eqref{eq:Qwfpows}, we find 
\begin{align}
\label{eq:psSFexact}
p(s)
&=
\frac{1}{s}
\sum_{m=0}^{s}\binom{s}{m}A^{s-m}B^m
[w^{s-1}]L(w)^m\notag\\[2mm]
&=
\frac{(1-f)^{s-1}}{s}
\sum_{m=0}^{s}\binom{s}{m}A^{s-m}B^m\times\notag\\
&\qquad \sum_{k_1,\dots,k_m\ge 1}
\frac{\binom{K}{s-1}f^{K-s+1}}{k_1^\gamma\cdots k_m^\gamma}
\mathbf{1}_{\{K\ge s-1\}}.
\end{align}
A detailed asymptotic analysis of this last result is presented in Appendix~\ref{app5}. The estimate is valid in the strongly heterogeneous regime, namely for small values of the degree exponent $\gamma$ (for instance, $\gamma \in [2,3]$), where the heavy-tailed degree distribution dominates the avalanche statistics. In this regime, the main contribution to the multiple sums in Eq.~\eqref{eq:psSFexact} arises from configurations with large total degree
\(
K = k_1 + \cdots + k_m,
\)
for which the binomial and power-law factors can be estimated asymptotically. This leads to the following leading-order behavior
\begin{equation}\label{SEC3EQ2}
p(s)
\approx
\frac{1}{\zeta(\gamma-1)}
\frac{(1-f)^{\gamma-1}}{f^\gamma}
s^{-\gamma}.
 \end{equation}
Let us emphasize the change in the power-law exponent $\tau$ induced by dissipation. In the absence of dissipation ($f=0$), it has been shown that, in the regime $2<\gamma<3$, the avalanche-size distribution obeys a power law with exponent
\(
\tau=\frac{\gamma}{\gamma-1},
\)
see Eq.~\eqref{SEC2EQ2}. In contrast, when dissipation is introduced ($0<f<1$), the exponent becomes
\(
\tau=\gamma.
\)
In practice, however, simulations are performed on finite-size networks. Consequently, when $f$ is very small, the system remains close to the critical regime and the avalanche-size distribution is dominated by the scaling behavior described by Eq.~\eqref{SEC2EQ2}, rather than by the dissipative asymptotics of Eq.~\eqref{SEC3EQ2}. To observe the scaling predicted by Eq.~\eqref{SEC3EQ2}, the dissipation must therefore be sufficiently strong to drive the system away from criticality. For instance, for scale-free networks of size $N=10\,000$, our numerical experiments indicate that dissipation probabilities of at least $f\gtrsim 0.4$ are required before the dissipative regime becomes clearly visible. Numerical evidence supporting that is presented in Fig~\ref{sp_Sf_All} of Appendix.~\ref{app5}.
Figures \ref{sp_Sf} and \ref{sp_Sf_2} present the avalanche-size distributions obtained on scale-free networks in the regime of very feeble dissipation ($f=4\times10^{-3}$). In this limit, the system remains close to the critical state assumed in the branching-process analysis developed in Section~\ref{sec1} with no dissipation. The distributions exhibit an extended power-law regime over several decades, confirming the scale-invariant nature of avalanche propagation. Moreover, the fitted exponents decrease as the degree exponent $\gamma$ increases, in agreement with the theoretical prediction $\tau=\gamma/(\gamma-1)$. As shown in Fig.~\ref{sp_Sf_2}, the numerical estimates follow the same trend as the theoretical curve, although finite-size effects and residual correlations produce small quantitative deviations. Overall, these results support the validity of the branching-process description for weakly dissipative scale-free networks and confirm that network heterogeneity controls the avalanche statistics through the exponent $\gamma$.
A qualitatively different behavior emerges in the regime of strong dissipation, $f=0.5$, shown in Figs.~\ref{sp_Sf_Diss} and \ref{sp_Sf_Diss_2}. In this case, the branching process becomes subcritical since the average number of offspring satisfies $\mu(f)=1-f<1$, implying that avalanches cannot propagate indefinitely. Consequently, the avalanche-size distributions are no longer characterized by a broad critical regime and instead display a pronounced truncation of large events. The fitted exponents now increase with $\gamma$, ranging from approximately $2.08$ for $\gamma=2.1$ to $3.21$ for $\gamma=3.0$, values that are close to the theoretical prediction $\tau=\gamma$ derived for dissipative scale-free networks. The comparison between Figs.~\ref{sp_Sf_2} and \ref{sp_Sf_Diss_2} therefore illustrates the transition predicted by the theory: very feeble dissipation preserves the critical scaling $\tau=\gamma/(\gamma-1)$ associated with self-organized criticality, whereas strong dissipation drives the system into a subcritical regime where avalanche propagation is strongly suppressed and the asymptotic scaling is governed by $\tau=\gamma$. This transition highlights the fundamental role of dissipation in controlling the universality class of avalanche dynamics on scale-free networks (see also Fig.~\ref{Br_sf}). Furthermore, as predicted by our theoretical analysis, Figs.~\ref{sp_Sf_2} and \ref{sp_Sf_Diss_2} reveal a crossover behavior for values of $\gamma$ slightly larger than $3$. In the presence of strong dissipation, the avalanche-size distribution is characterized by the exponent $\tau=\gamma$, whereas for weak dissipation the observed exponent remains close to the value $\tau\simeq 3/2$.


  \begin{figure}[H]
    \centering
    \includegraphics[width=0.95\linewidth]{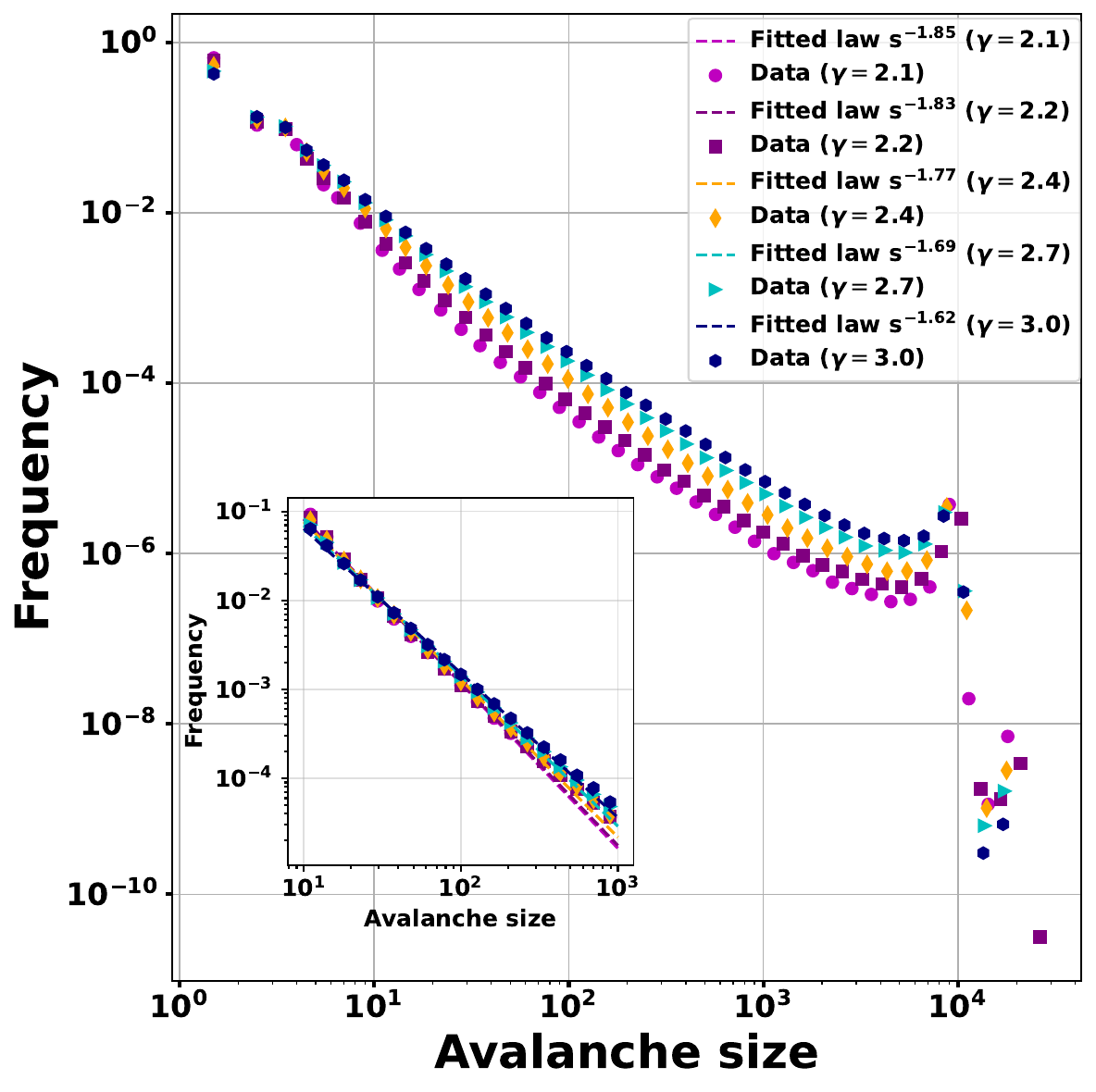}
   \caption{Main panel shows the distributions of avalanche sizes obtained with the sandpile model on different scale-free graphs. The number of nodes in each graph is $10\,000$. The dissipation probability is $4\times 10^{-3}$ in each case. The data are grouped into logarithmically spaced bins before being displayed on a log–log plot. Inset shows distributions fitted on range $[10,1000]$. The goodness-of-fit diagnostics remain moderate over the considered values of $\gamma$, with
$2.72\times10^{-2}\leq KS\leq4.52\times10^{-2}$,
$5.01\times10^{-2}\leq V\leq6.52\times10^{-2}$,
$1.00\times10^{-3}\leq A^2\leq3.59\times10^{-3}$,
and
$7.28\times10^{-3}\leq |\kappa-1|\leq8.66\times10^{-3}$.}
   \label{sp_Sf}
\end{figure}
\begin{figure}[H]
    \centering
    \includegraphics[width=0.95\linewidth]{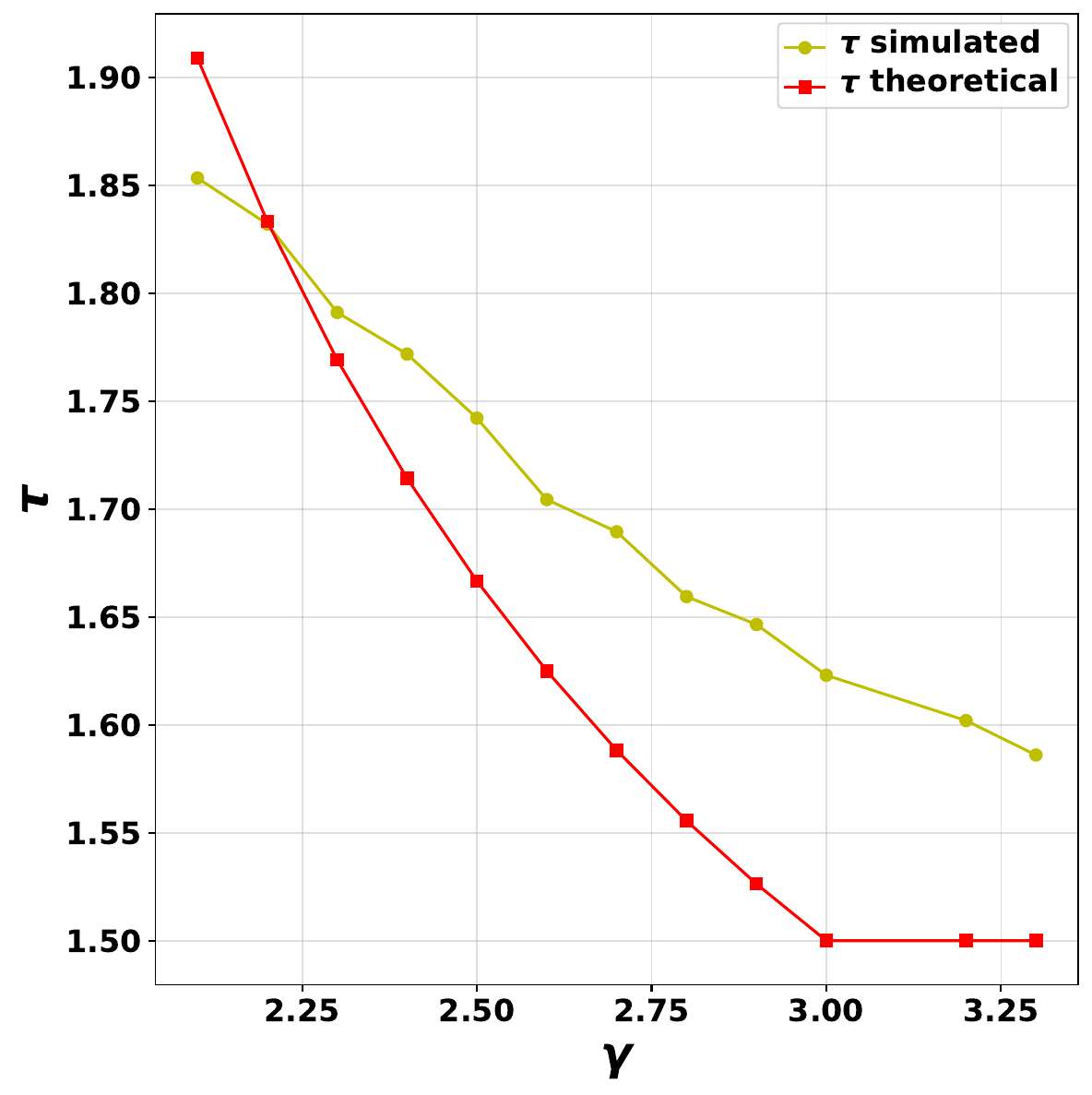}
   \caption{Variation of exponents $\tau$ we obtained by fitting the distribution by power-law with respect to $\gamma$ obtained from data shown in Figure~\ref{sp_Sf}. The dissipation probability is $f=4\times 10^{-3}$.}
   \label{sp_Sf_2}
\end{figure}

  \begin{figure}[H]
    \centering
    \includegraphics[width=0.95\linewidth]{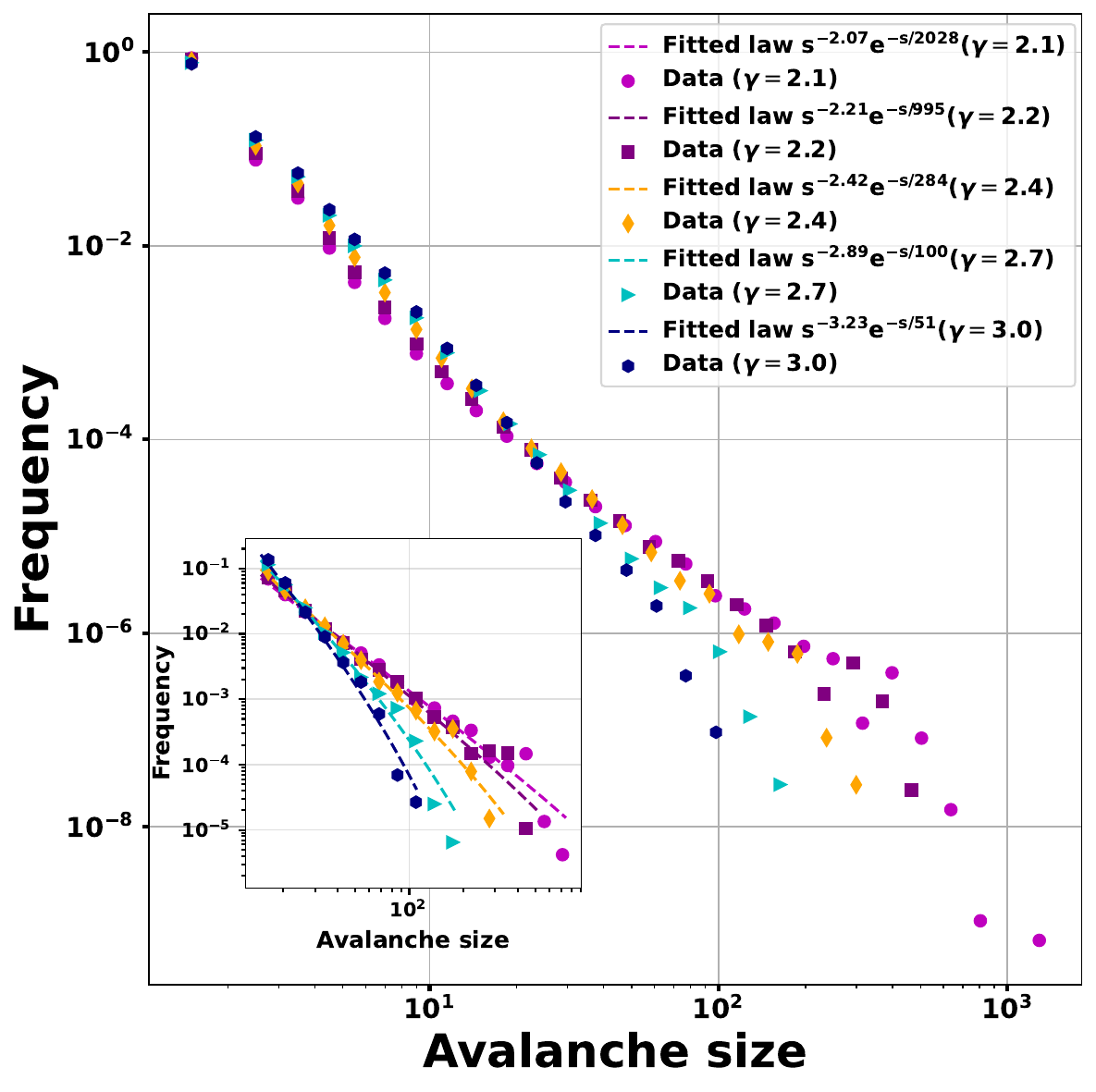}
   \caption{Main panel shows the distributions of avalanche sizes obtained with the sandpile model on different scale-free graphs. The number of nodes in each graph is $10\,000$. The dissipation probability is $0.5$ in each case. The data are grouped into logarithmically spaced bins before being displayed on a log–log plot. Inset shows distributions fitted on range $[12,1000]$. The goodness-of-fit diagnostics remain small over the considered values of $\gamma$, with
$8.43\times10^{-3}\leq KS\leq1.88\times10^{-2}$,
$1.66\times10^{-2}\leq V\leq3.19\times10^{-2}$,
$2.11\times10^{-5}\leq A^2\leq4.52\times10^{-5}$,
and
$7.46\times10^{-4}\leq |\kappa-1|\leq1.49\times10^{-3}$.}
   \label{sp_Sf_Diss}
\end{figure}
\begin{figure}[H]
    \centering
    \includegraphics[width=0.95\linewidth]{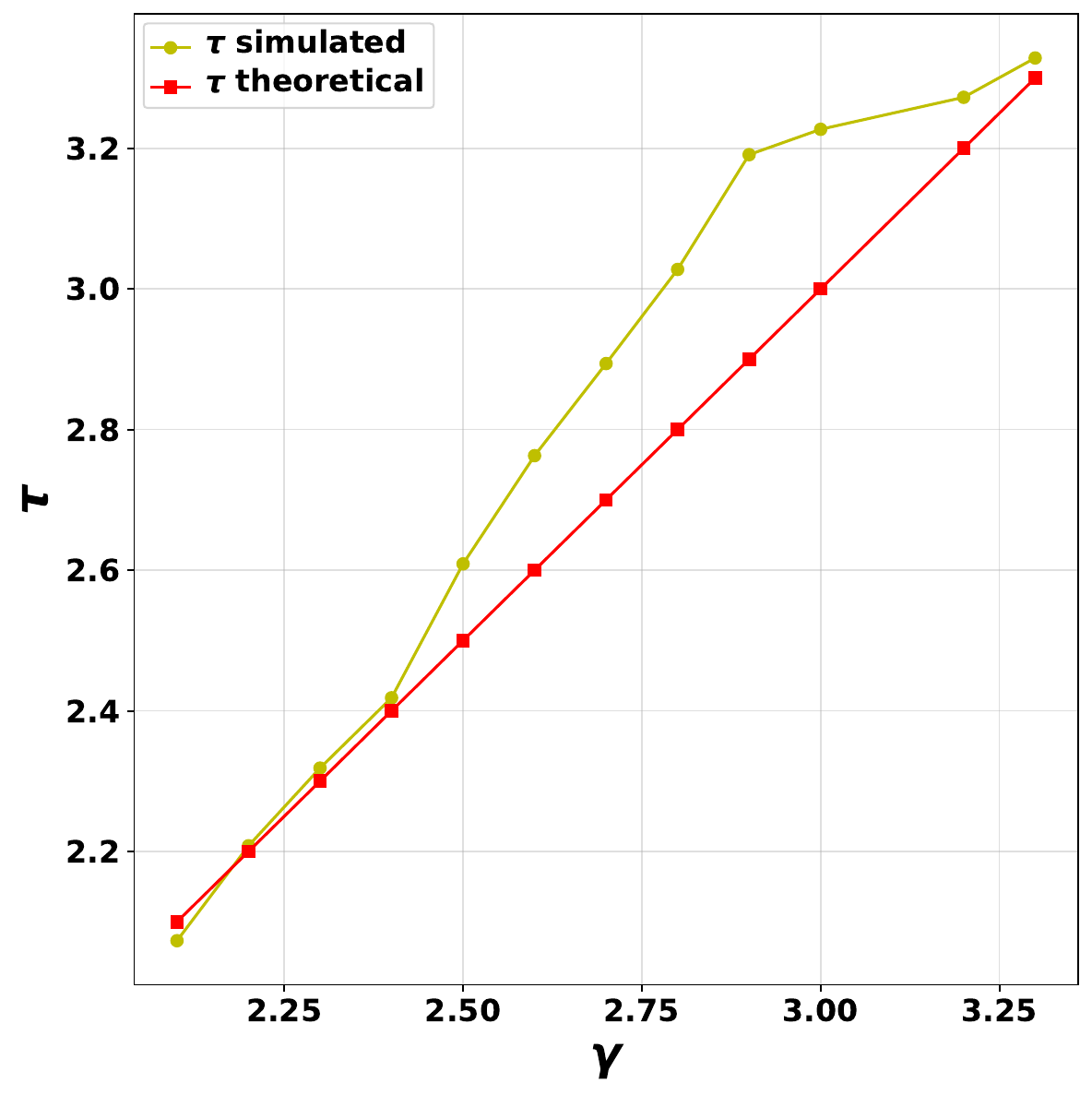}
   \caption{Variation of exponents $\tau$ we obtained by fitting the distribution by power-law with respect to $\gamma$ obtained from data shown in Figure~\ref{sp_Sf_Diss}. The dissipation probability is $f=0.5$.}
   \label{sp_Sf_Diss_2}
\end{figure}

\subsection{Erd\H{o}s--R\'enyi Network}
\label{ssec:ER}
We now consider the sandpile model defined on Erd\H{o}s-R\'enyi (ER) random graph \cite{ErdosRenyi1959} where each pair of nodes is independently connected with probability $p$. For large $N$ (number of nodes), the degree distribution is Poisson with mean $\lambda = p(N-1)$
namely
\begin{equation}
    p_d(k) = e^{-\lambda} \frac{\lambda^k}{k!}\, .
\end{equation}
Let $q_k$ be the probability that a node generates $k$ branches in the avalanche tree, then by using~\eqref{SEC1EQ3} we obtain
\begin{equation}
    q_k
     = \frac{\lambda^{k-1} e^{-\lambda}}{k!},
    \qquad k \ge 1\, ,
    \label{SEC2EQ3}    
\end{equation}
and
\begin{equation}
 q_0
 = 1 - \frac{1}{\lambda} + \frac{e^{-\lambda}}{\lambda}\, ,
 \label{SEC2EQ4}
\end{equation}
thus the generating function $Q(w)$ is 
\begin{equation}
Q(w)
 = 1 - \frac{1}{\lambda} + \frac{e^{\lambda(w-1)}}{\lambda}\, .
 \label{SEC2EQ5}
\end{equation}
By expanding $Q(w)$ in a Taylor series around $w=1$ we obtain
\begin{align*}
    Q(w)
    &= w + \frac{\lambda}{2}(w-1)^2 + {o}\bigl((w-1)^2\bigr)\, ,
\end{align*}
hence $Q''(1) = \lambda$. From Eq.~\eqref{SEC1EQ4} we obtain
\begin{equation}
    P(y) \approx 1 - \sqrt{\frac{2}{\lambda}}\,(1-y)^{1/2},
    \qquad y \uparrow 1\, ,
\end{equation}
and by using the Cauchy integral formula~\eqref{SEC1EQ6} we determine $p(s) \sim s^{-3/2}$ for large $s$. Let us observe that we could have obtained this same result from the Theorem~\ref{SEC1TH1}.

When dissipation is incorporated into the framework, we obtain
\begin{equation*}
Q(w|f)=
 1 - \frac{1}{\lambda}
   + \frac{e^{\lambda (u(w)-1)}}{\lambda}\,,
\end{equation*}
where $u(w) = (1-f)w + f$ (see Appendix~\ref{app2}).
One can easily verify that hypothesis (H1) of  Theorem~\ref{SEC1TH1} holds true. To verify the second hypothesis (H2) we look at the solution of the equation
\begin{equation*}
   Q(w^*|f)-w^*Q(w^*|f)=0\, ,
\end{equation*}
namely
\begin{align*}
   (1-w^*\lambda(1-f))\exp(\lambda((1-f)(w^*-1)))=1-\lambda\, .
\end{align*}
By posing $a=\lambda(1-f)$, the latter equation rewrites
\begin{align*}
(1-aw^*)\exp(a(w^*-1))=1-\lambda\, ,
\end{align*}
or equivalently 
\begin{align*}
(1-aw^*)\exp(aw^*)=(1-\lambda)\exp(a)\, .
\end{align*}
Let us define $y=1-aw^*$, hence
\begin{align*}
-y\,\exp(-y)=(\lambda-1)\exp(a-1 ) \, .
\end{align*}
In conclusion we obtain
\begin{align*}
y=-W\!\left((\lambda-1)\exp(a-1)\right)\, ,
\end{align*}
where $W$ is the Lambert W-function, and eventually
\begin{align}
\label{eq:wstarER}
w^*=\frac{1+W\!\left((\lambda-1)e^{\lambda(1-f)-1}\right)}{\lambda(1-f)},
\qquad \text{if } \lambda(1-f)\neq 0\, .
\end{align}
This solution is always positive for $0<f<1$; indeed $\lambda >1$ because of the assumption of connected network (see Fig~\ref{Er} for the case $\lambda=10$). 

The radius of convergence of $Q(w|f)$ at $w=0$ is $\infty$, hence hypothesis (H2) is satisfied and we can conclude
\begin{align}
\label{eq:psER}
p(s)=C \rho^{-s} s^{-3/2}\, ,
\end{align}
where $\rho=w^*/Q(w^*|f)$  and $C$ is some positive constant. 
Numerical evidence supporting this result is provided in Fig.~\ref{Sp_Br_er}. In Fig.~\ref{sp_Er} we report similar results for different values of $p$, the probability for a link to exist, and we can observe that the distributions follow the same law with exponent $3/2$, the latter results thus to be robust and independent from the connectivity $p$ and the dissipation $f$. We also provided another formulation of $p(s)$ using the Theorem~\ref{thm:lagrange} in Appendix~\ref{app6}.
\begin{figure}[ht]
    \centering
    \includegraphics[width=0.95\linewidth]{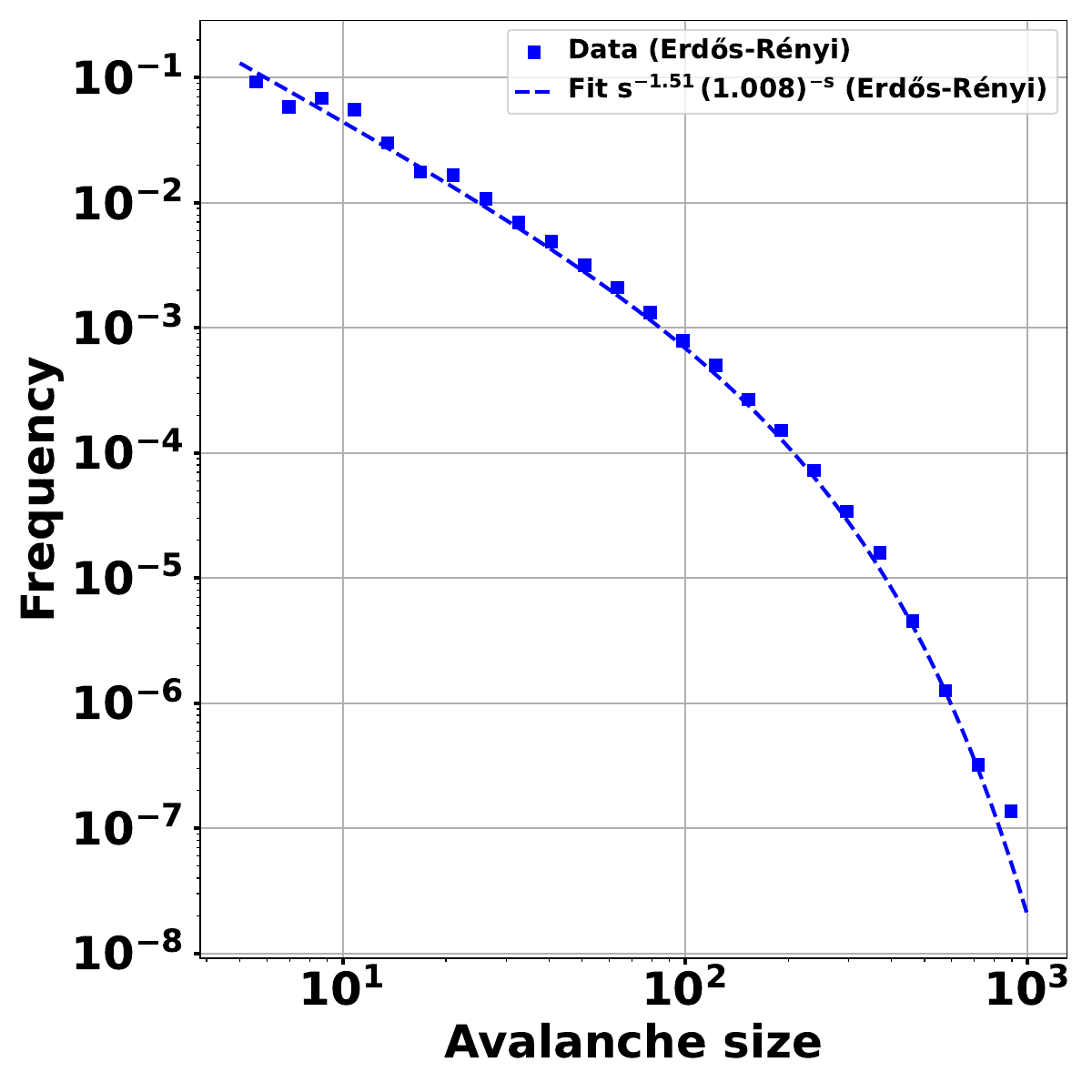}
    \caption{Distributions of avalanche sizes obtained with the sandpile model on an Erd\H{o}s--R\'enyi graph with a probability than an edge exists between any pair of nodes $p=0.001$. The number of nodes in each graph is $10\,000$. The dissipation probability is $f=0.1$ in each case. The data are grouped into logarithmically spaced bins before being displayed on a log–log plot. Distributions are fitted on range $[5,1000]$. The goodness-of-fit diagnostics remain extremely small, with
$KS=7.46\times10^{-4}$,
$V=1.43\times10^{-3}$,
$A^2=1.38\times10^{-8}$,
and
$|\kappa-1|=1.67\times10^{-5}$. }
    \label{Sp_Br_er}
\end{figure}

\subsection{Random Regular Network}
In a random $k_0$-regular graph \cite{Mckaywormald1990}, every node has the same degree $k_0$ and the degree distribution is therefore 
\begin{equation}
    p_d(k) = \delta_{k,k_0}\,,
\end{equation}
where $\delta_{k,k_0}$ is the Kronecker delta. In other words, all nodes have exactly $k_0$ neighbors. We assume $k_0 > 2$; if $k_0 = 1$, the network is not connected if there are more than $N=2$ nodes, and if $k_0 = 2$, we obtain a cycle graph rather than a genuinely random graph.

From~\eqref{SEC1EQ3}, the only non-zero element of the offspring probability $q_k$ is
\begin{equation*}
 q_{k_0} = \frac{p_d(k_0)}{\sum_j j\,p_d(j)}= \frac{1}{k_0}\, ,   
\end{equation*}
and $q_k = 0$ for all others $k \ge 1$. For $k=0$ we obtain
\begin{equation*}
   q_0 = 1 - q_{k_0}= \frac{k_0 - 1}{k_0}\, .
   \label{SEC2EQ6}
\end{equation*}
Hence the associated generating function is
\begin{equation}
    Q(w) = \frac{k_0 - 1}{k_0} + \frac{1}{k_0} w^{k_0}.
\end{equation}
By expanding around $w=1$,
\begin{equation*}
    Q(w)
    = w + \frac{k_0-1}{2}(w-1)^2
       + {o}\bigl((w-1)^2\bigr)\, ,
\end{equation*}
thus $Q''(1) = k_0 - 1$. The general critical branching-process result~\eqref{SEC1EQ4} then gives,
\begin{align}
P(y) \approx 1 -
\sqrt{\frac{2}{k_0 - 1}}\,(1-y)^{1/2},
\qquad y \uparrow 1,
\end{align}
and again $p(s) \sim s^{-3/2}$ for large avalanche sizes $s$, as one can prove by using the Cauchy integral formula~\eqref{SEC1EQ6}. 

When dissipation is incorporated into the model, the expression for the modified generating function becomes
\begin{equation*}
Q(w|f)
=
 \frac{k_0-1}{k_0}+\frac{u(w)^{k_0}}{k_0}\, ,
\end{equation*}
where $u(w) = (1-f)w + f$ (see Appendix~\ref{app2} for details).
In this case, it is straightforward to verify that the first hypothesis of Theorem~\ref{SEC1TH1} holds true. 
For the second hypothesis, we need again to determine the solution of the equation
\begin{equation*}
   Q(w^*|f)-w^*Q(w^*|f)=0\, .
\end{equation*}
By posing $y=(1-f)w^*+f$, we get 
\begin{equation}
\label{eq:polyk0}
(1-k_0)y^{k_0}+f y^{k_0-1}+k_0-1=0\, .
\end{equation}
It is difficult to find a closed form for $w^*$, however we can prove the existence of a positive root by applying the Descartes rule of signs and the observation that the leading coefficient of~\eqref{eq:polyk0} is negative. This result can be numerically confirmed (see Fig.~\ref{Rr} corresponding to $k_0=10$). Moreover, the radius of convergence of $Q(w|f)$ at $w=0$ is infinite. Consequently we can apply Theorem~\ref{SEC1TH1} and conclude that
\begin{align}
\label{eq:psregk0}
p(s)=C \rho^{-s} s^{-3/2}\, ,
\end{align}
where $\rho = \frac{w^*}{Q(w^*|f)}$ and $C>0$ is a constant. Numerical evidence supporting this result can be observed in Fig~\ref{Sp_Br_rr}. Another formulation of $p(s)$ using the Theorem~\ref{thm:lagrange} is provided in Appendix~\ref{app6}.
\begin{figure}[ht]
    \centering
    \includegraphics[width=0.95\linewidth]{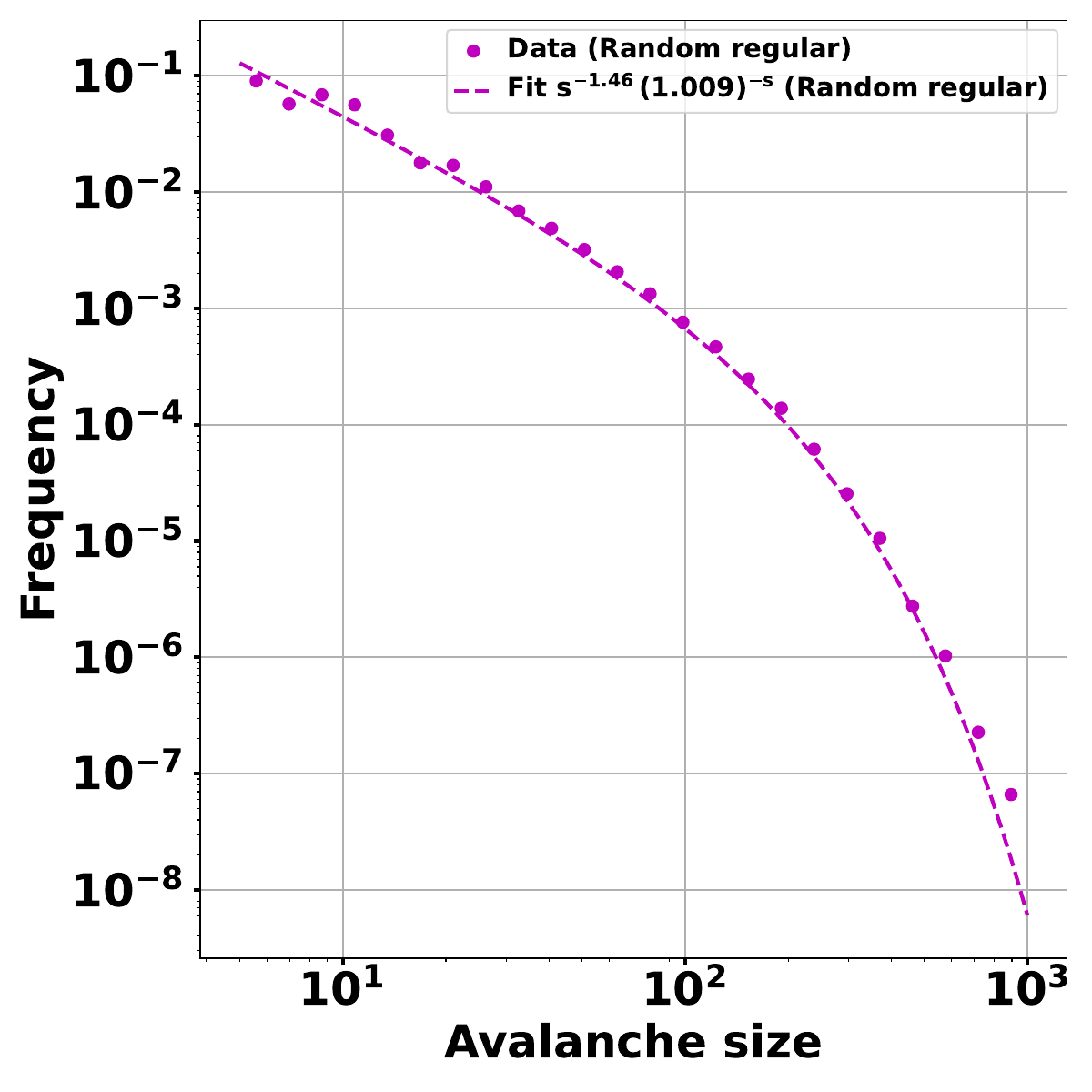}
    \caption{Distributions of avalanche sizes obtained with the sandpile model on a random $10$-regular graph. The number of nodes in each graph is $10\,000$. The dissipation probability is $f=0.1$ in each case. The data are grouped into logarithmically spaced bins before being displayed on a log–log plot. Distributions are fitted on range $[5,1000]$. The goodness-of-fit diagnostics satisfy
$KS=1.99\times10^{-3}$,
$V=3.92\times10^{-3}$,
$A^2=2.77\times10^{-8}$,
and
$|\kappa-1|=8.04\times10^{-5}$. }
    \label{Sp_Br_rr}
\end{figure}
\subsection{Comparison of asymptotic avalanche size distributions}
The study reported in the previous sections allowed us to clearly differentiate the behavior of the sandpile model defined on scale-free networks with respect to the Erdős–Rényi and random-regular topology. The aim of this section is to compare and discuss those results with the help of Figures~\ref{Br_sf},~\ref{Br_er} and~\ref{Br_rr} where we report respectively the theoretical distribution of $p(s)$ for the scale-free network given by Eq.~\eqref{SEC2EQ2} if $f=0$ and Eq.~\eqref{SEC3EQ2} if $0<f<1$, the Erdős–Rényi case Eq.~\eqref{eq:psER} and the random-regular one, Eq.~\eqref{eq:psregk0}. 

The scale-free case is special because the degree distribution itself is heavy-tailed, so large hubs can sustain avalanche propagation even when dissipation is introduced. As a result, the theoretical distributions in Fig.~\ref{Br_sf} do not show an exponential truncation when $f>0$ but a clear power law with exponent $\tau=\gamma$; instead, increasing dissipation mainly shifts the power-law behavior from the nearly conservative exponent toward the dissipative scaling predicted for scale-free networks, namely the term $(1-f)^\gamma/f^\gamma$ in Eq.~\eqref{SEC3EQ2}. Therefore, the exponential cutoffs observed in the numerical scale-free simulations should be interpreted primarily as finite-size effects: in a finite network, the maximum degree and the maximum avalanche size are bounded, which necessarily produces a cutoff at large $s$.
This behavior contrasts with Figs.\ref{Br_er} and \ref{Br_rr}, corresponding respectively to the Erdős–Rényi and random-regular cases. In these networks, the degree distribution is not heavy-tailed, and the branching-process theory predicts a distribution of the form $p(s)\sim s^{-3/2}\rho^{-s}$ when dissipation is present. Hence, the cutoff is intrinsic to the dissipative dynamics and is further reinforced by the finite size of the simulated system. In other words, for Erdős–Rényi and random-regular graphs, both dissipation and finite-size effects contribute to the exponential decay of the tail. The comparison with Fig.~\ref{Br_sf} therefore shows that scale-free networks form a distinct case: under strong dissipation, the dominant effect is not simply a truncation of the same law, but rather a shift toward a different power-law regime governed by the scale-free exponent. The exponent $\tau=3/2$ is a robust mean-field prediction of branching-process theory and arises whenever the offspring distribution has a finite second moment. In such cases, the detailed form of the offspring distribution does not affect the universality class of the avalanche dynamics. Therefore, the minimal ingredient required to deviate from this exponent is the presence of a sufficiently broad offspring distribution, typically characterized by a divergent variance. In the present framework, this situation naturally occurs in scale-free networks, where the degree distribution follows $p_d(k)\sim k^{-\gamma}$ ($2<\gamma\leq3$). The associated offspring distribution inherits this heavy-tailed character, causing the avalanche statistics to become explicitly dependent on the network heterogeneity. As a consequence, the mean-field exponent $\tau=3/2$ is no longer observed. In the weakly dissipative regime, the avalanche-size distribution is characterized by the exponent $\tau=\gamma/(\gamma-1)$, while strong dissipation leads to a different scaling behavior with $\tau=\gamma$. By contrast, Erdős--Rényi and random regular networks generate offspring distributions with finite variance and therefore retain the mean-field exponent $\tau=3/2$, with dissipation only introducing an exponential cutoff.

The distribution displayed in Fig.~\ref{Br_sf} correspond to the exact theoretical case $f=0$. By contrast, the numerical results shown in Figs.~\ref{sp_Sf} and ~\ref{sp_Sf_2} were obtained using a very small but nonzero dissipation probability to guarantee stationarity in finite systems. This weak dissipation slightly shifts the system away from the critical point and leads to fitted exponents that are systematically larger than the theoretical prediction $\tau=\gamma/(\gamma-1)$. The observed discrepancies in Fig.~\ref{sp_Sf_2} should therefore be regarded as finite-dissipation corrections. We can observe for instance the exponent shift in Fig.~\ref{Br_sf} for $f=0.012$.
\begin{figure}
    \centering
    \includegraphics[width=0.95\linewidth]{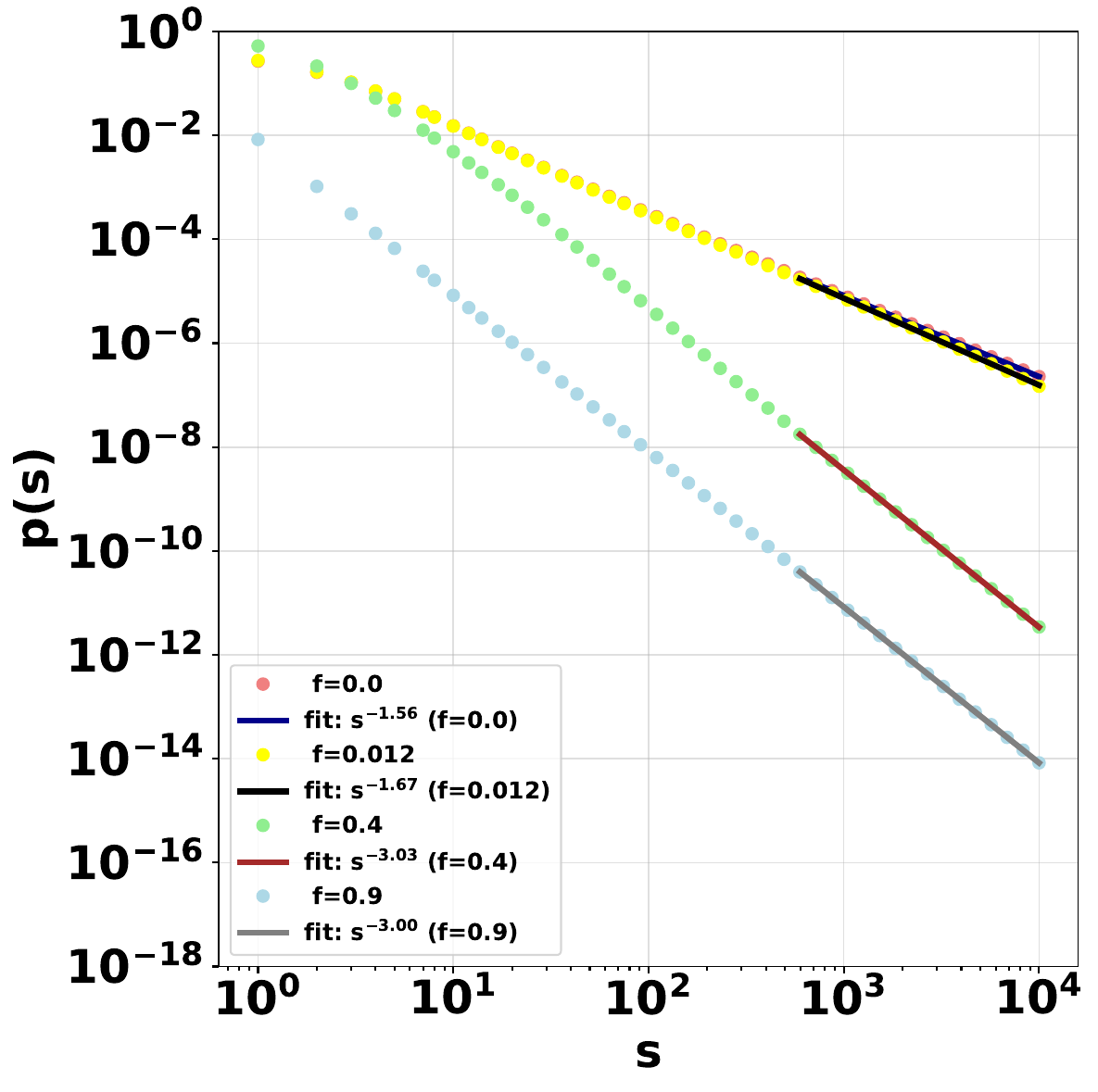}
    \caption{Avalanche size distributions for different values of the dissipation probability $f$ in the scale-free case with $\gamma=3$ as obtained from Theorem~\ref{thm:lagrange}. The  fitting range is $[5,10\,000]$. }
    \label{Br_sf}
\end{figure}
\begin{figure}
    \centering
    \includegraphics[width=0.95\linewidth]{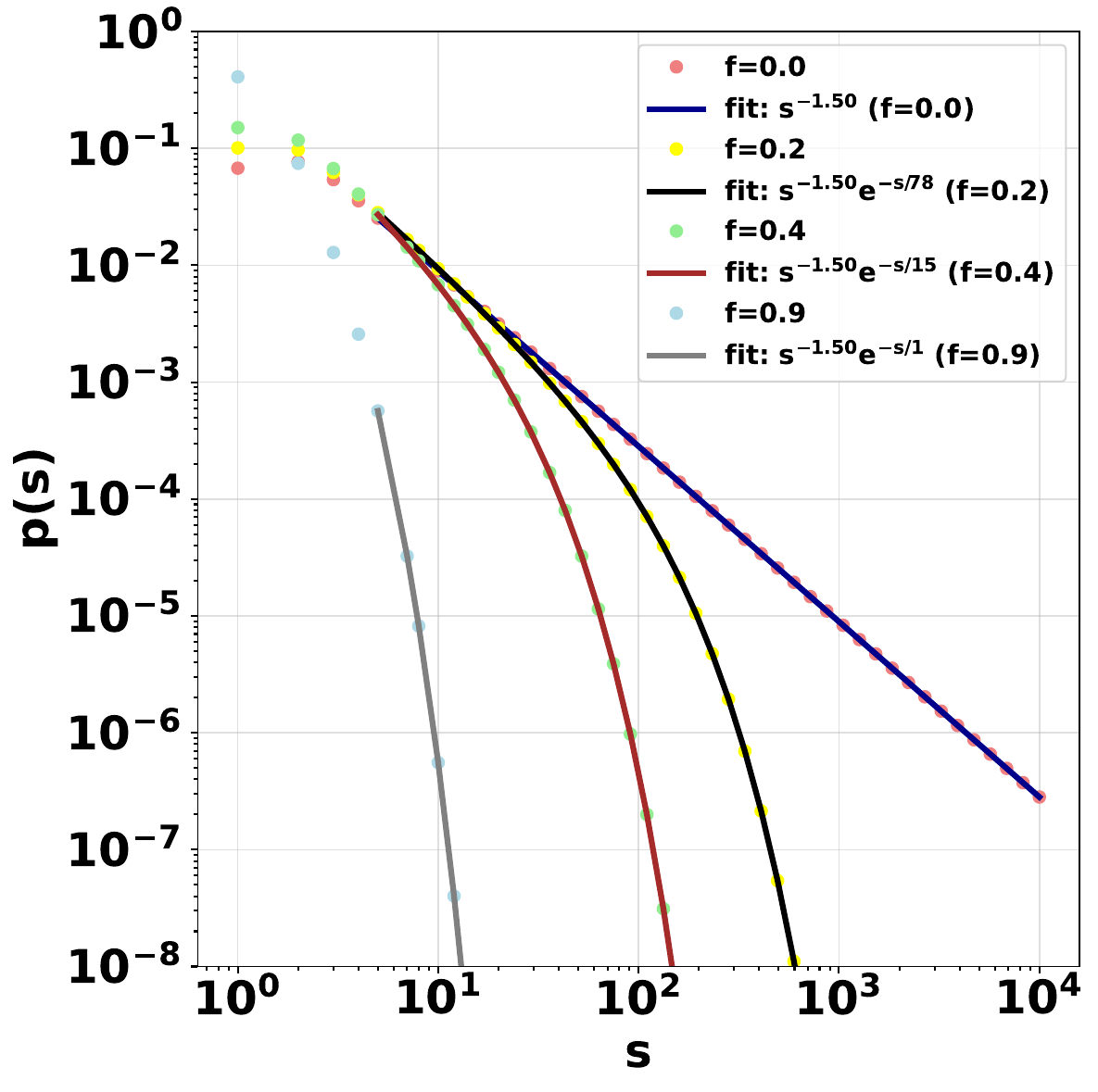}
    \caption{Avalanche size distributions for different dissipation probability $f$ in the Erdős–Rényi case with $\lambda=2$ as obtained from Theorem~\ref{thm:lagrange}. The  fitting range is $[5,10\,000]$.}
    \label{Br_er}
\end{figure}
\begin{figure}
    \centering
    \includegraphics[width=0.95\linewidth]{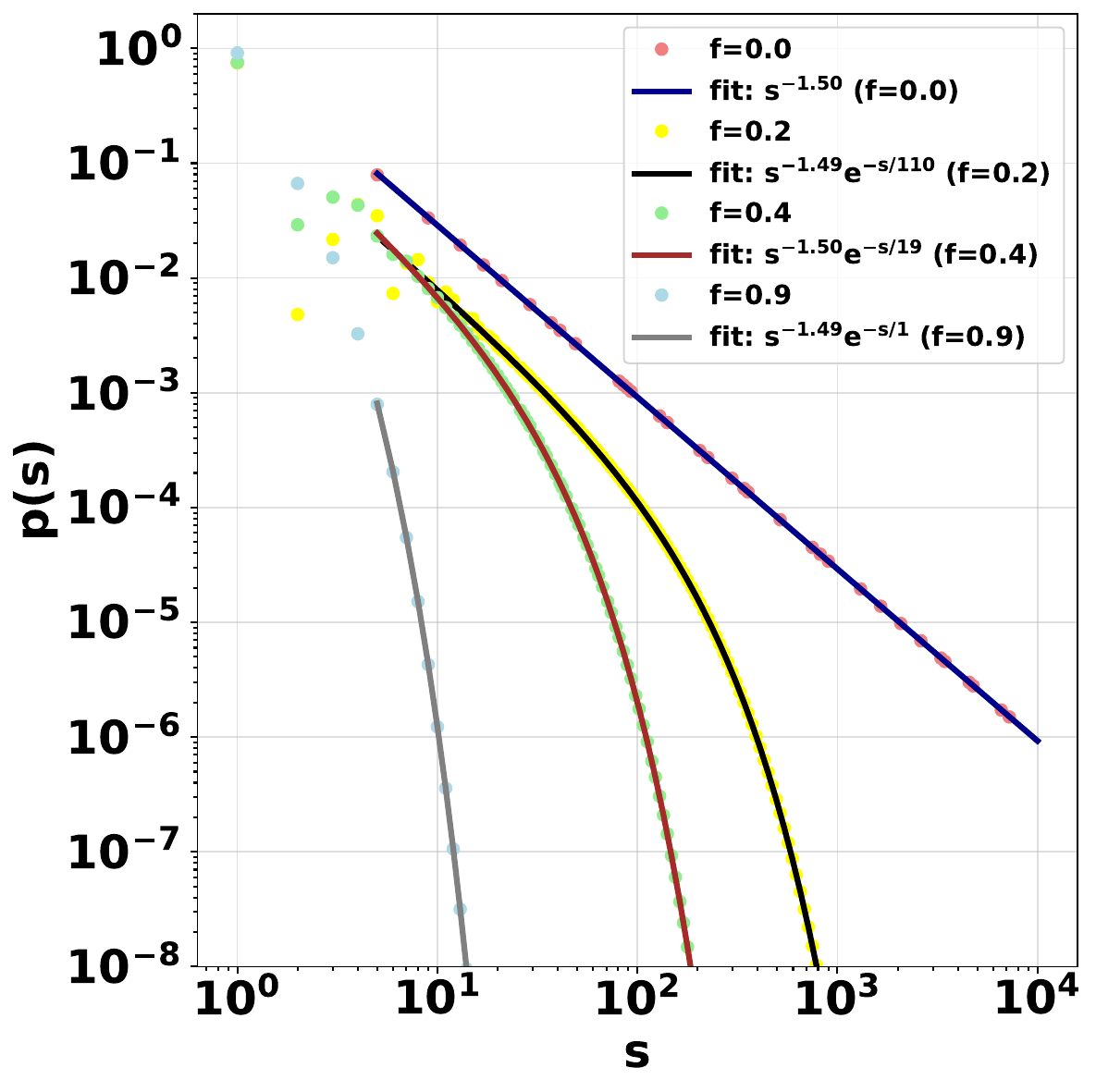}
    \caption{Avalanche size distributions for different dissipation probability $f$ in the random regular case with $k_0=4$ as obtained from Theorem~\ref{thm:lagrange}. The  fitting range is $[500,10\,000]$. }
    \label{Br_rr}
\end{figure}

\section{Impact of cycles}
\label{sec:cycles}
We now investigate how short cycles (in particular, triangles) affect the avalanche statistics of the BTW sandpile dynamics on scale-free networks. To this end, we consider the scale-free network model introduced by Holme and Kim~\cite{HolmeKim2002PRE}. This model generates networks with degree distribution $p_d(k)\sim k^{-\gamma}$, and it additionally allows us to tune the amount of clustering through a single parameter $p_{HK}\in [0,1]$. In the following we fixed $\gamma=3$ as in the Barab\'asi-Albert model~\cite{Babaalabert1999}.
The Holme-Kim construction is based on preferential attachment, but after attaching a new node to an existing node, one can close a triangle with probability $p_{HK}$ (triad formation). As a result, $p_{HK}$ directly controls the clustering coefficient:
\begin{itemize}
    \item For $p_{HK}=0$, no triangles are created. The network is essentially tree-like (locally close to a tree), and the clustering coefficient is approximately zero for very large graphs.
    \item For $p_{HK}=1$, the triad-formation step is performed whenever possible, producing a network with many triangles and a significantly larger clustering coefficient.
\end{itemize}
\begin{figure}
    \centering
    \includegraphics[width=1\linewidth]{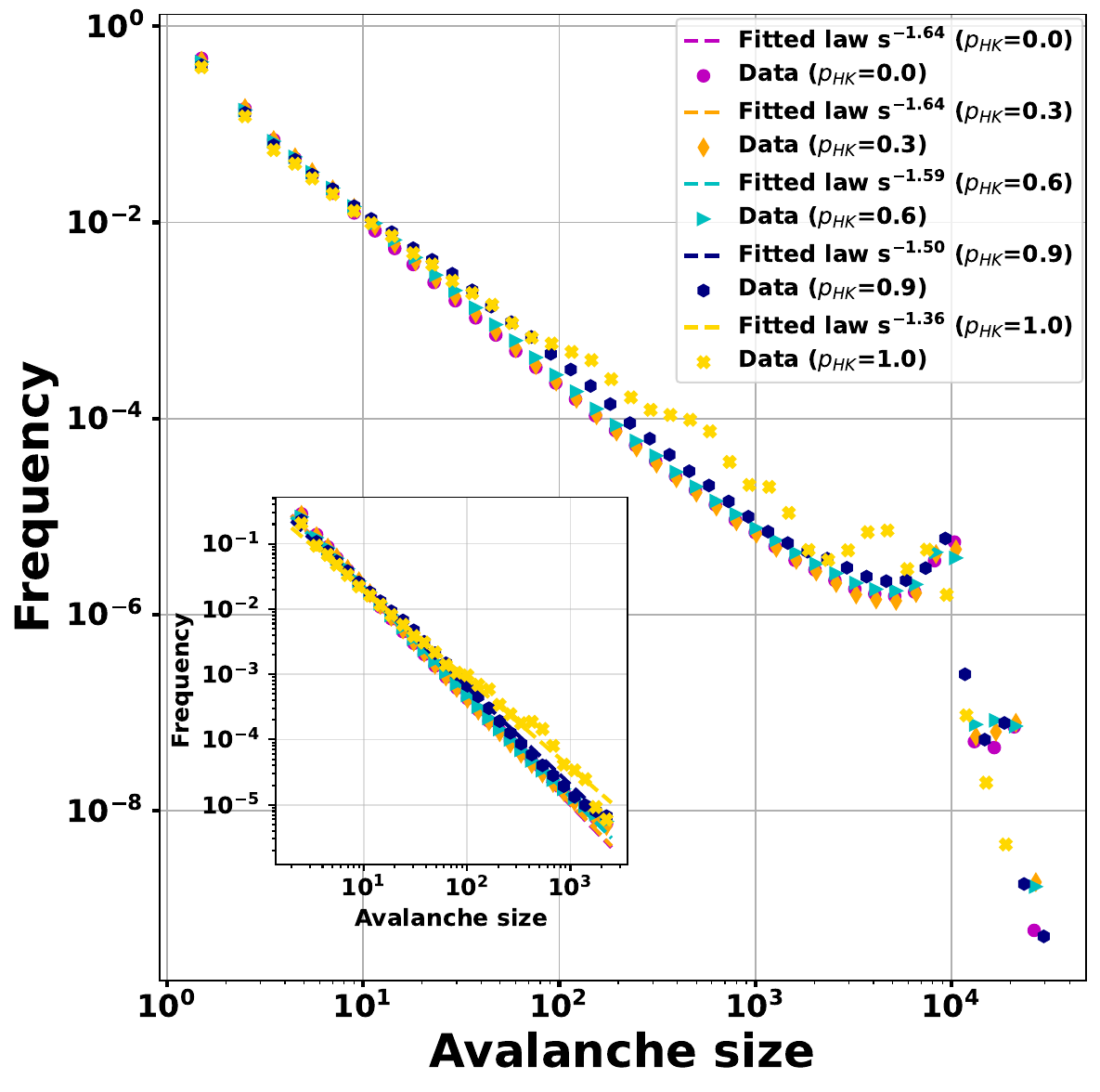}
   \caption{Main panel shows the distributions of avalanche sizes obtained with the sandpile model on different Holme-Kim scale-free graphs with $\gamma=3$. The number of nodes in each graph is $10\,000$. The dissipation probability is $5\times 10^{-3}$ in each case. The data are grouped into logarithmically spaced bins before being displayed on a log–log plot.  Inset shows distributions fitted on range $[1,2500]$. The goodness-of-fit diagnostics satisfy
$6.19\times10^{-3}\leq KS\leq1.22\times10^{-2}$,
$1.01\times10^{-2}\leq V\leq1.93\times10^{-2}$,
$5.76\times10^{-6}\leq A^2\leq4.14\times10^{-5}$,
and
$1.99\times10^{-3}\leq |\kappa-1|\leq3.21\times10^{-3}$.}
   \label{sp_Tsf}
\end{figure}
\begin{figure}
    \centering
    \includegraphics[width=1\linewidth]{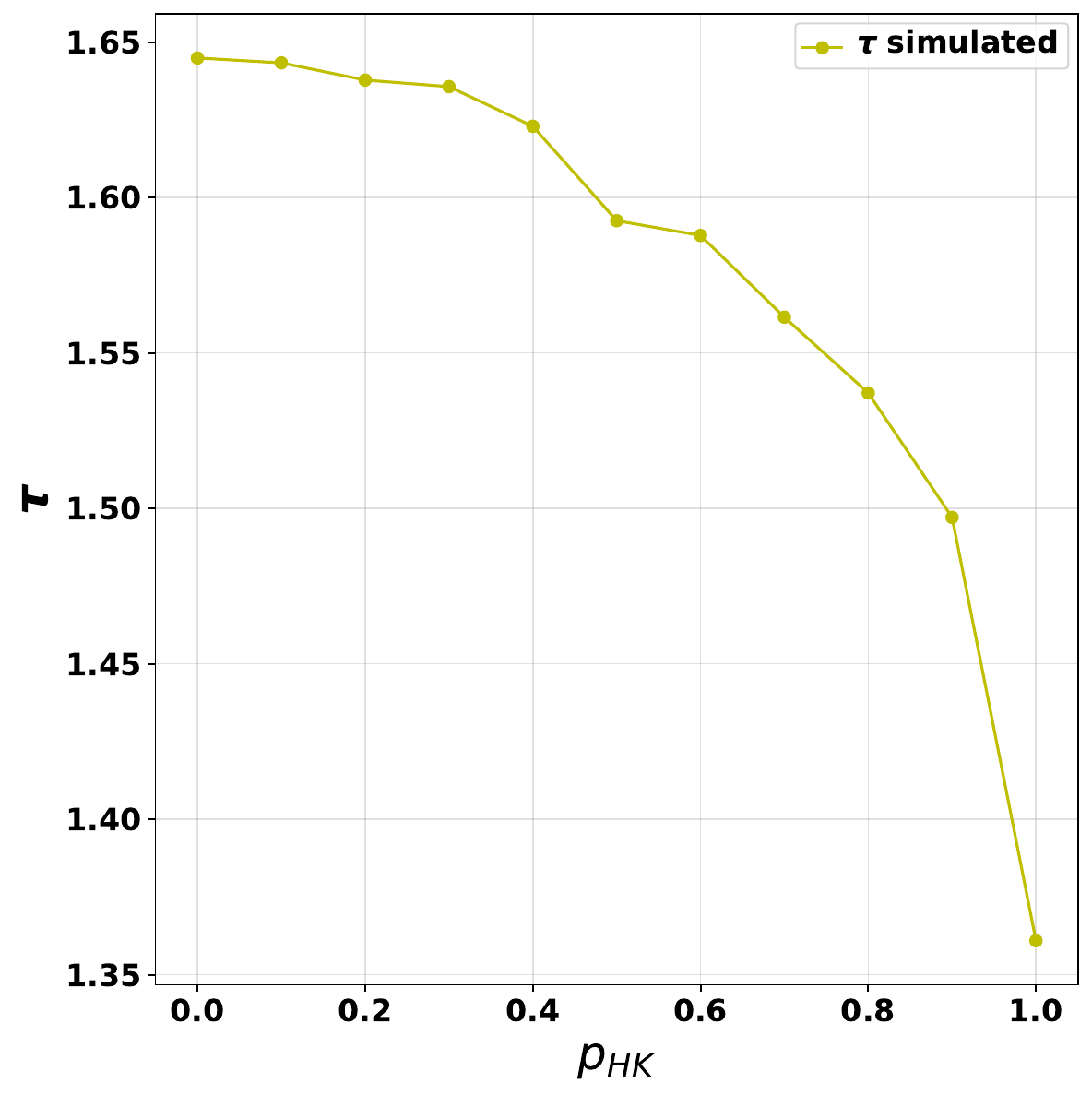}
   \caption{Variation of exponents $\tau$ we obtained by fitting the distribution by power-law with respect to $p_{HK}$ obtained from data shown in Figure~\ref{sp_Tsf}. The dissipation probability is $f=5\times 10^{-3}$.}
   \label{sp_Tsf_2}
\end{figure}

Importantly, while $p$ changes the abundance of small loops, the degree exponent remains close to $\gamma=3$, so we can isolate the effect of clustering without changing the overall scale-free nature of the topology. As already shown above, one can study avalanche dynamics in networks by approximating the spreading of activity as a branching process, for this to hold true one should typically assume that the underlying graph is a tree or at least locally tree-like. This assumption allows to consider different ``branches'' of the avalanche to evolve almost independently, which greatly simplifies the analysis. However, when the network contains many loops (especially triangles), this independence is no longer guaranteed: activity can return to already visited regions, different branches can merge, and correlations become stronger. Therefore, increasing $p$ provides a controlled way to test how the breakdown of the locally tree-like assumption modifies the avalanche-size distribution. We simulate the BTW sandpile model on Holme-Kim networks with $\gamma=3$ for several values of $p\in[0,1]$ and measure the avalanche size $s$ (the total number of topplings in an avalanche). The resulting distributions are shown in the main panel of Fig.~\ref{sp_Tsf}, where we plot the frequency of events as a function of $s$ on log-log axes, together with power-law fits in the scaling regime. Overall, the distributions remain close to a power law,
\begin{equation}
p(s)\sim s^{-\tau},
\end{equation}
but the exponent $\tau$ is not constant: it changes systematically with the clustering level.
For $\gamma=3$, the branching-process prediction above presented when there is no dissipation, gives $\tau=3/2$, while the previous result, for the case $\gamma=3$, returned an exponent $\tau\simeq 1.62$ (Fig.~\ref{sp_Sf}), which is already close to $3/2$. Consistently, for $p=0$ (the least clustered case), Fig.~\ref{sp_Tsf} yields $\tau\simeq 1.64$, in good agreement with the earlier estimate. As $p$ increases, triangles become more abundant and the fitted exponent decreases. This trend is summarized in the Fig.~\ref{sp_Tsf_2}, which reports the fitted $\tau$ as a function of $p$. We observe a clear monotonic drop from \(\tau \approx 1.64 \quad (p=0)\) down to \(
\tau \approx 1.36 \quad (p=1).\) 
In other words, stronger clustering makes the avalanche-size distribution heavier-tailed: large avalanches become relatively more frequent compared to the weakly clustered case. This behavior is consistent with the idea that clustering breaks the locally tree-like structure required by the branching-process approximation. When many triangles are present, the avalanche propagation is no longer well described by independent branching: feedback through short loops and the merging of propagation paths introduce correlations that effectively alter the scaling. The main qualitative conclusion is that small cycles are not a negligible detail: even at fixed $\gamma=3$, tuning the triangle-closing probability changes the measured critical exponent and thus the universality class suggested by the simplest tree-based theory. We now show that similar phenomena arise also on regular network topologies.

We thus apply the sandpile model to three regular network topologies.
All networks have the same number of nodes $N$ and the same constant degree $k_0=4$,
so differences in the results come only from the connectivity pattern. We have considered:
\begin{itemize}
\item {\em Random regular graph}: a regular graph where every node has degree $k_0$,
but edges are paired uniformly at random subject to this constraint. The network is homogeneous, yet it has no underlying spatial structure.
\item {\em Torus grid} $(100,100)$: a $100\times100$ square lattice with periodic boundary
conditions in both directions (equivalently, the lattice is wrapped into a torus). Each node connects to its four nearest neighbors (up, down, left, right), hence $k_0=4$.
\item {\em Regular ring}: nodes are placed on a cycle and connected locally. With $k_0=4$, each node links to its two nearest neighbors on the left and two on the right,
so all nodes have the same degree while preserving a one-dimensional geometry.
\end{itemize}
\begin{figure}[ht]
    \centering
    \includegraphics[width=1\linewidth]{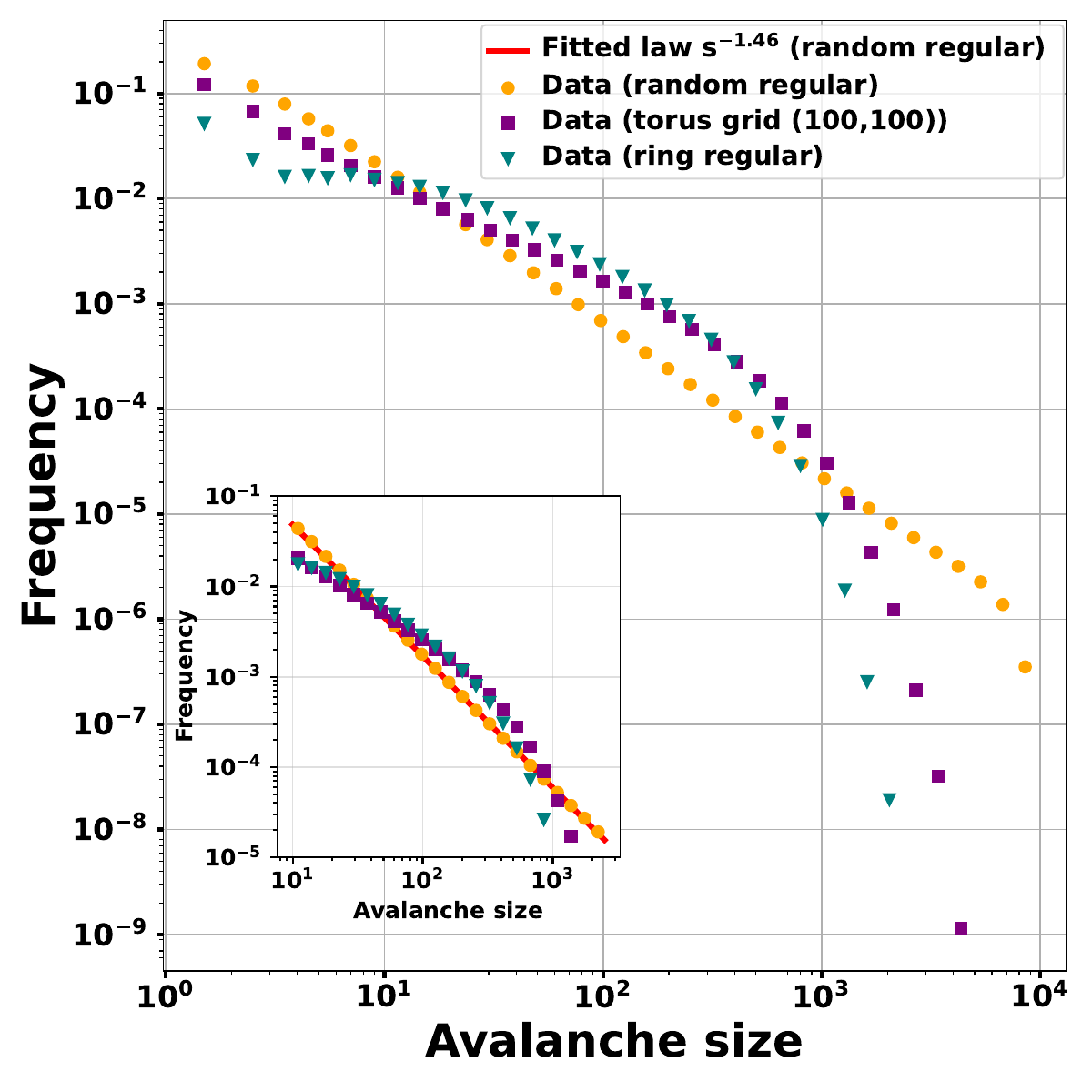}
   \caption{Distributions of avalanche sizes obtained with the sandpile model on different random regular graphs. The dissipation probability is $5\times 10^{-3}$ in each case. The data are grouped into logarithmically spaced bins before being displayed on a log–log plot. The fitting range is $[10,2500]$. The goodness-of-fit diagnostics satisfy
$KS=3.86\times10^{-3}$,
$V=6.41\times10^{-3}$,
$A^2=1.01\times10^{-4}$,
and
$|\kappa-1|=8.56\times10^{-5}$. }
   \label{Sp_Rn}
\end{figure}

The avalanche-size distributions are reported in Fig.~\ref{Sp_Rn}. A clear power-law regime is observed for the random regular network, which is consistent with the fact that random regular graphs are locally tree-like: in the large-$N$ limit, short cycles are rare and the neighborhood of a typical node resembles a tree over several steps. In contrast, no convincing power-law behavior appears for the two other graphs, because they contain many short loops and therefore deviate strongly from a locally tree-like geometry. For the torus grid, each node belongs to several 4-cycles, and the total number of elementary squares scales as the number of nodes, making short cycles ubiquitous. The regular ring 
is a purely ordered one-dimensional lattice with strong local clustering and a high density of short loops. Overall, these results suggest that local tree-likeness favors scale-free avalanche statistics, whereas abundant short cycles tend to suppress an extended power-law regime.
For the random regular network, we obtain a fitted exponent of $1.46$ for $k_0=4$, which is quasi equal to the value $1.5$ predicted by the branching-process approach.  
This small difference arises because the branching-process calculation relies on simplifying several expressions to capture the leading behavior of the distribution.

\subsection{Special case of trees}
\begin{figure}[ht]
    \centering
    \includegraphics[width=1\linewidth]{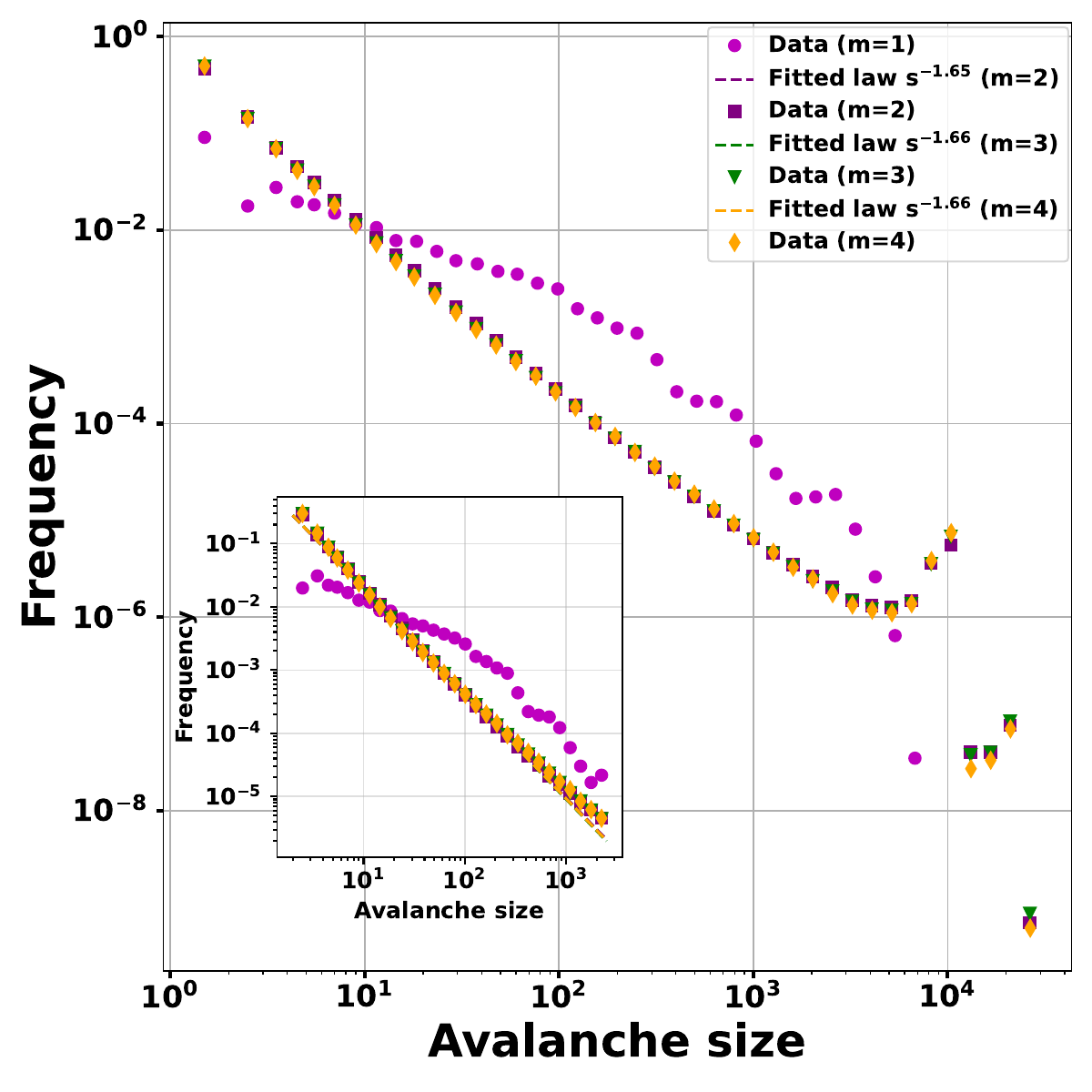}
   \caption{Distributions of avalanche sizes obtained with the sandpile model on different scale-free graphs obtained with the Barab\'asi-Albert model. The number of nodes in each graph is $10\,000$. The dissipation probability is $5\times 10^{-3}$ in each case. The data are grouped into logarithmically spaced bins before being displayed on a log–log plot. Inset shows distributions fitted on range $[1,2500]$. The goodness-of-fit diagnostics remain small across values of $m$, with
$KS\in[1.09,\,2.24]\times10^{-2}$,
$V\in[1.69,\,3.04]\times10^{-2}$,
$A^2\in[0.93,\,3.74]\times10^{-5}$,
while the normalized Kappa coefficient remains close to unity,
$|\kappa-1|\in[2.80,\,3.90]\times10^{-3}$.}
   \label{sp_bb}
\end{figure}
We now investigate the sandpile model with dissipation on scale-free networks generated by the Barab\'asi-Albert (BA) preferential-attachment model \cite{Babaalabert2002,Babaalabert1999}. We consider several values of the parameter $m$, which denotes the number of edges that each newly added node creates upon entering the network. For each network, we simulate the sandpile dynamics and measure the resulting avalanche-size distribution, shown in Fig.~\ref{sp_bb}. In the standard BA model, the degree distribution follows a power law with exponent $\gamma=3$. For $m\geq 2$, the avalanche-size distributions are well described by power-law, with exponents close to those obtained for the static model at $\gamma=3$. In contrast, the case $m=1$ yields a tree (a loopless network), however in this tree regime, the avalanche-size distribution does not exhibit a clear power-law scaling, suggesting that the sandpile dynamics on trees
does not produce avalanches whose sizes follow a power-law in the same way as in loopy networks. This result seems to be at odds with the fact that in a tree, the statistical independence of the avalanche branches, is satisfied. Observe also that for $m=1$ the average degree is also small. 

A similar conclusion holds true in the case of Cayley graphs, that also have a trees structure. Let us denote by Cayley graph $(n_1,n_2)$, the Cayley graph where each node has $n_1$ neighbors except the leaves and the number of generations is $n_2 $. 
 The avalanche size distribution on Cayley graphs has been derived analytically in \cite{DharMajumdar1990} and they obtain $p(s) \sim s^{-3/2}$ for large avalanche sizes, without taking into account dissipation but allowing grains to leave the system once in the leaves. We apply the sandpile model on the following Cayley graphs:
\begin{itemize}
  \item Cayley graph $(2,5000)$, we obtain a network of 10001 nodes;
  \item Cayley graph $(3,12)$, we obtain a network of 12286 nodes;
  \item Cayley graph $(4,8)$, we obtain a network of 13121 nodes;
  \item Cayley graph $(5,6)$, we obtain a network of 6826 nodes.
\end{itemize}
\begin{figure}
    \centering
    \includegraphics[width=1\linewidth]{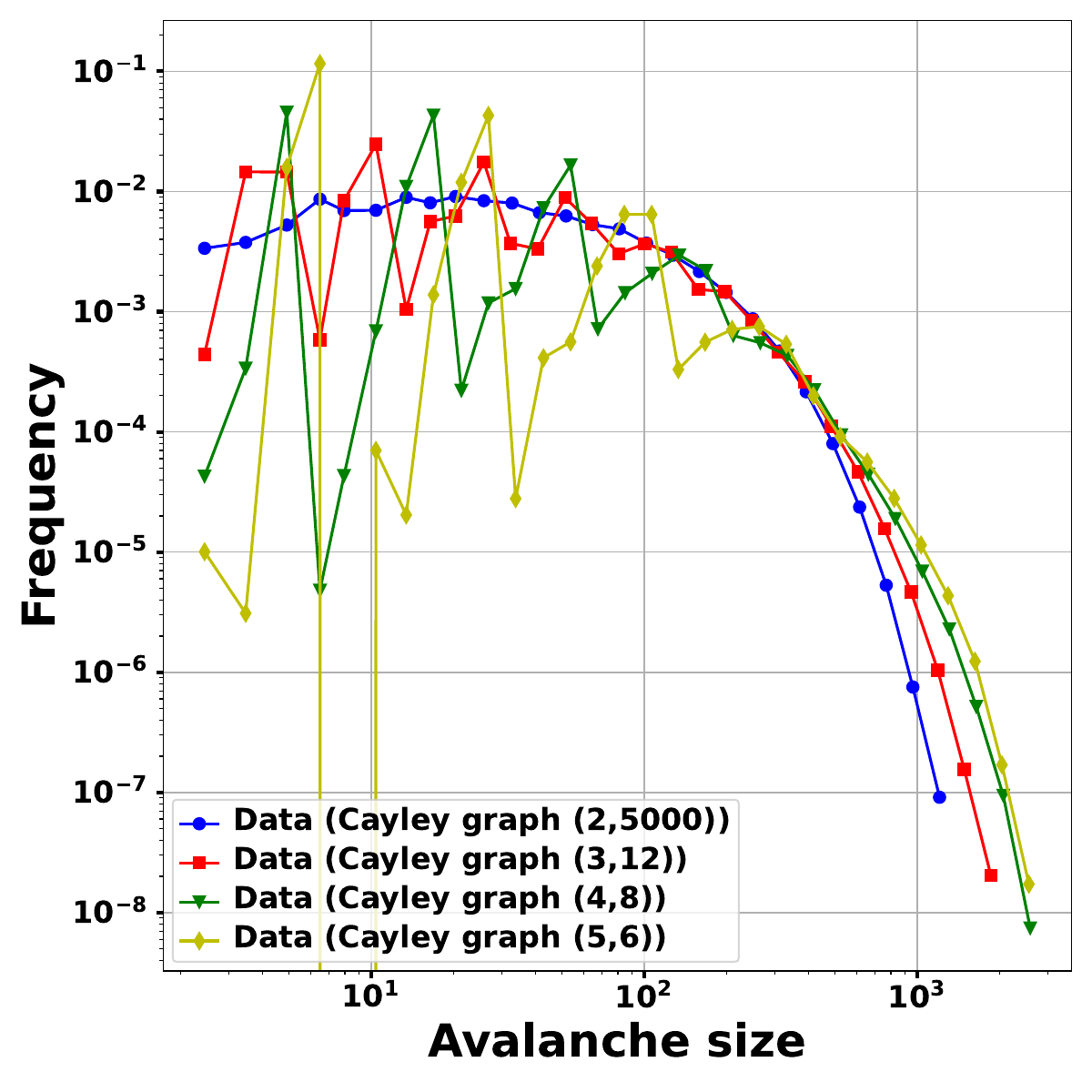}
   \caption{Distributions of avalanche sizes obtained with the sandpile model on different Cayley graphs. The dissipation probability is $5\times 10^{-3}$ in each case. We run the avalanche simulations $10$ times to obtain $10$ different samples of avalanche sizes. We merge the $10$ samples into a single sample. The data are grouped into logarithmically spaced bins before being displayed on a log–log plot.}
   \label{Sp_Ct}
\end{figure}
The avalanche-size distributions obtained on the different Cayley graphs are shown in Fig.~\ref{Sp_Ct}. We observe that these distributions do not follow a power law, even though the branching-process method predicts a power law for such graphs. Cayley graphs are trees, so one might expect the branching-process approximation to work well in this setting. However, in our case the agreement is poor because the network contains too few edges for the number of nodes in it. More precisely, a tree with $N$ nodes has exactly $N-1$ edges, which strongly limits the possible propagation paths of an avalanche. We already observed that with  Barab\'asi-Albert (BA)  model ($m=1$). When the number of edges is small, the branching-process approximation does not provide a good description of the avalanche statistics. A similar limitation was already observed in the previous section for random regular networks at small $k_0$. In that case, increasing $k_0$ increases the number of edges and the fitted exponent moves closer to the branching-process prediction. The explanation is that when the number of edges is small compared to the number of nodes, grains tend to quickly return to nodes they have already visited. This creates correlations between branches of the avalanche and violates the independence assumption underlying the branching-process approximation. In addition, the quality of the Cayley graph distributions worsens when we increase the number of neighbors per node. This phenomenon is explained by the presence of many leaves in these trees. Leaves have degree one, so when they receive a grain they can transfer it only to their unique neighbor. This favors rapid termination of avalanches and distorts the resulting size distribution. Therefore, despite the tree-like nature of Cayley graphs, the branching-process method does not reproduce the observed distributions in Fig.~\ref{Sp_Ct}.

\section{Conclusions}
\label{sec:conclu}
In this work, we investigated the sandpile model on a wide range of complex network topologies by combining analytical branching-process methods with extensive numerical simulations. A central contribution of the study was the extension of the classical branching-process framework to include dissipation during avalanche propagation. By introducing a probability of grain loss into the offspring distribution, we derived generalized generating functions capable of describing avalanche dynamics in dissipative environments. The analytical treatment showed that dissipation drives the system from the critical regime toward a subcritical one, leading to avalanche-size distributions characterized by power-law behavior with exponential cutoffs rather than purely scale-free statistics. Moreover we proved that the power law exponent moves from the value $\tau=\gamma/(\gamma-1)$, in the case $f=0$, to $\tau=\gamma$ for $0<f<1$.

A major objective of the work was to understand the impact of cycles and clustering on avalanche statistics. By using Holme--Kim clustered scale-free networks, we demonstrated that by increasing the density of triangles systematically modifies the avalanche-size distribution. In particular, the avalanche exponent decreases as clustering increases, implying heavier-tailed distributions and a larger probability of extreme cascades. These results clearly show that short cycles introduce strong correlations between avalanche branches and therefore invalidate the independence assumptions underlying classical branching-process theories. Consequently, clustering does not merely perturb the dynamics quantitatively, but can modify the avalanche-size distribution and thus provides evidence for a change in the underlying critical behavior.

The comparison between different regular network structures further emphasized the importance of local geometry. Random regular graphs displayed approximate power-law behavior because their local neighborhoods remain close to trees despite degree homogeneity. In contrast, regular lattices and ring-like structures exhibited strong deviations from power-law scaling due to the abundance of short loops and spatial correlations. These findings indicate that local tree-likeness is more important for branching-process behavior than degree regularity alone.
The analysis of trees produced particularly interesting results. Although branching-process methods are naturally motivated by tree structures, the simulations revealed that sparse trees do not necessarily exhibit the expected scale-free avalanche behavior. In Barabási--Albert trees with $m=1$ and in Cayley graphs, avalanche-size distributions deviated significantly from theoretical predictions. The results suggest that the low edge density and the large number of leaves strongly constrain avalanche propagation and introduce correlations that are not captured by ideal branching approximations. This demonstrates that the absence of loops alone is insufficient to guarantee branching-process behavior; connectivity density and structural heterogeneity also play essential roles. 
For very low values of the dissipation probability $f$, a noticeable bump appears in the tails of the distribution, as observed in Figs.~\ref{sp_bb}, \ref{sp_Tsf}, \ref{sp_Sf}, and \ref{sp_Er}. However, statistical analysis does not support the presence of dragon kings \cite{mikaberidze2023dragonkingsselforganizedcriticality}. This excess of large avalanches is more likely caused by finite-size and near-critical effects: when dissipation is weak, avalanches can spread through a large fraction of the network before dying out, by increasing correlations and enhancing the impact of the network topology on the avalanche dynamics.

Overall, this study highlights the strong sensitivity of the sandpile model to the underlying network structure. In particular, the distributions obtained on locally tree-like networks differ markedly from those observed on trees. Dissipation, clustering, cycles, and sparse connectivity all affect avalanche propagation in nontrivial ways. Although branching-process approaches provide valuable analytical insight for random and locally tree-like networks, their limitations become apparent in highly structured or strongly correlated systems.

\section*{Acknowledgments}
Computational resources have been provided by the Consortium des Équipements de Calcul Intensif (CÉCI), funded by the Fonds de la Recherche Scientifique de Belgique (F.R.S.-FNRS) under Grant No. 2.5020.11 and by the Walloon Region. Research is funded by ARC project "EMOTIONS".

\bibliography{article}

\newpage
\appendix

\section{Determination of the functional equation $P(y) = y\, Q\bigl(P(y)\bigr)$}\label{app1}

The aim of this Appendix is to prove the main functional equation~\eqref{SEC1EQ4} relating the probability generating function of the avalanche distribution size, $P(y)$, and the offspring distribution $\{q_k\}_{k \ge 0}$.

\begin{proof}
We decompose the random tree according to the first generation. The root of the avalanche tree (the initially toppled node) always exists and contributes one individual. Let $K$ be the number of children of the root, i.e., the number of nodes that topple as a direct consequence of the root toppling. Conditional on $K = k$, the root has $k$ children, giving rise to $k$ subtrees, one rooted at each child. Let $S_i$ denote the total size of the subtree rooted at the $i$-th child ($i = 1,\dots,k$). By the branching property, the random variables $S_1,\dots,S_k$ are independent and identically distributed, each with the same distribution as $S$. Thus, conditional on $K=k$,
\[
  S = 1 + \sum_{i=1}^{k} S_i.
\]
For any $s \ge 1$, conditioning on $K$ yields
\begin{equation}\label{APP1EQ1}
p(s) = \mathbb{P}(S = s)
  = \sum_{k=0}^{\infty} q_k\, \mathbb{P}(S = s \mid K = k).
\end{equation}
Fix $k \ge 0$.
The event $\{S = s\}$ is equivalent to
\[
  1 + \sum_{i=1}^{k} S_i = s
  \quad \Longleftrightarrow \quad
  \sum_{i=1}^{k} S_i = s-1.
\]
Let $\alpha_1,\dots,\alpha_k \ge 1$ be integers. Since the subtrees are independent and each $S_i$ has distribution $\{p(\cdot)\}$, we have
\[
 \mathbb{P}(S_1=\alpha_1,\dots,S_k=\alpha_k)
  = \prod_{i=1}^{k} p(\alpha_i).
\]
Therefore, conditional on $K=k$,
\begin{align*}
\mathbb{P}(S = s \mid K = k)
  &= \sum_{\alpha_1,\dots,\alpha_k \ge 1}
     \left[\prod_{i=1}^{k} p(\alpha_i)\right]
     \delta_{\alpha_1 + \cdots + \alpha_k,\,s-1},
\end{align*}
where $\delta_{a,b}$ is the Kronecker symbol. Plugging this expression into
\eqref{APP1EQ1} gives
\begin{equation}\label{APP1EQ2}
p(s)
  = \sum_{k=0}^{\infty} q_k
    \sum_{\alpha_1,\dots,\alpha_k \ge 1}
    \left[\prod_{i=1}^{k} p(\alpha_i)\right]
    \delta_{\alpha_1 + \cdots + \alpha_k,\,s-1}.
\end{equation}
Multiply \eqref{APP1EQ2} by $y^s$ and sum over $s \ge 1$
\begin{align*}
P(y)
  &= \sum_{s=1}^{\infty} p(s)\, y^s \\
  &= \sum_{s=1}^{\infty}
     \sum_{k=0}^{\infty} q_k
     \sum_{\alpha_1,\dots,\alpha_k \ge 1}
     \left[\prod_{i=1}^{k} p(\alpha_i)\right]
     \delta_{\alpha_1 + \cdots + \alpha_k,\,s-1}\, y^s.
\end{align*}
Since all terms are nonnegative, we may interchange the order of summation
\begin{align*}
P(y)
 &= \sum_{k=0}^{\infty} q_k
    \sum_{\alpha_1,\dots,\alpha_k \ge 1}
    \left[\prod_{i=1}^{k} p(\alpha_i)\right]
    \sum_{s=1}^{\infty}
    \delta_{\alpha_1 + \cdots + \alpha_k,\,s-1}\, y^s.
\end{align*}
For fixed $\alpha_1,\dots,\alpha_k$, the Kronecker symbol enforces
$s-1 = \alpha_1 + \cdots + \alpha_k$, so
\[
\sum_{s=1}^{\infty}
   \delta_{\alpha_1 + \cdots + \alpha_k,\,s-1}\, y^s
 = y^{1 + \alpha_1 + \cdots + \alpha_k}.
\]
Thus
\begin{align*}
P(y)
 &= \sum_{k=0}^{\infty} q_k
    \sum_{\alpha_1,\dots,\alpha_k \ge 1}
    \left[\prod_{i=1}^{k} p(\alpha_i)\right]
    y^{1 + \alpha_1 + \cdots + \alpha_k} \\
 &= y \sum_{k=0}^{\infty} q_k
    \sum_{\alpha_1,\dots,\alpha_k \ge 1}
    \prod_{i=1}^{k} \bigl(p(\alpha_i)\,y^{\alpha_i}\bigr).
\end{align*}
The inner sums factorize,
\begin{align*}
P(y)
 &= y \sum_{k=0}^{\infty} q_k
    \prod_{i=1}^{k}
    \left(\sum_{\alpha_i=1}^{\infty} p(\alpha_i)\,y^{\alpha_i}\right) \\
 &= y \sum_{k=0}^{\infty} q_k \bigl(P(y)\bigr)^k.
\end{align*}
The last equality follows from the definition of $P(y)$. Recognizing the generating function $Q(w)$, we obtain
\[
   P(y)
   = y\, Q\bigl(P(y)\bigr),
\]
which is exactly \eqref{SEC1EQ4}.
\end{proof}

\section{Computation of generating functions and means in presence of dissipation}\label{app2}

In this Section we will provide detailed computations to obtain the probability generating function of the offspring distribution, $Q(w|f)$, and the mean, $\mu$. 

\paragraph{Scale-Free case}

Scale-free networks are characterized by a degree distribution given by
\begin{equation*}
   p_d(k) \sim k^{-\gamma}\, ,
\end{equation*}
with $2 < \gamma < 3$ in many empirical networks~\cite{Babaalabert1999,newmanbook}. By assuming the minimum degree to be $k_{\min}=1$, we compute
\begin{equation*}
 q_k = \frac{k^{-\gamma}}{\zeta(\gamma-1)},
 \qquad k \ge 1\, ,
\end{equation*}
and thus
\begin{equation*}
     q_0
     = 1 - \sum_{k=1}^{\infty} \frac{k^{-\gamma}}{\zeta(\gamma-1)} \notag = 1 - \frac{\zeta(\gamma)}{\zeta(\gamma-1)}\, ,
\end{equation*}
where $\zeta(\gamma) = \sum_{n=1}^{\infty} n^{-\gamma}$ is the Riemann zeta function.

From the definition~\eqref{eq:Qwffun} we can compute
\begin{align*}
Q(w|f) 
&= \sum_{k \ge 0} q_k (f)w^k \\[6pt]
&= 1 - \frac{\zeta(\gamma)}{\zeta(\gamma-1)}
 \\&  + \frac{1}{\zeta(\gamma-1)}
     \sum_{k=0}^{\infty} \sum_{m=k}^{\infty}
     \binom{m}{k} m^{-\gamma} (1-f)^k f^{m-k}w^k \\[6pt]
&= 1 - \frac{\zeta(\gamma)}{\zeta(\gamma-1)}
\\&   + \frac{1}{\zeta(\gamma-1)}
     \sum_{m=0}^{\infty} m^{-\gamma}
     \sum_{k=0}^{m}
     \binom{m}{k} (1-f)^k f^{m-k}w^k \\[6pt]
&= 1 - \frac{\zeta(\gamma)}{\zeta(\gamma-1)}
\\&   + \frac{1}{\zeta(\gamma-1)}
     \sum_{m=0}^{\infty} m^{-\gamma}
     \big[w(1-f)+f\big]^m \\[6pt]
&= 1 - \frac{\zeta(\gamma)}{\zeta(\gamma-1)}
 \\&  + \frac{1}{\zeta(\gamma-1)}
     \sum_{k=0}^{\infty}
     \frac{\mu(f)^k}{k^\gamma}\\
     Q(w|f)
&=
1 - \frac{\zeta(\gamma)}{\zeta(\gamma-1)}
+ \frac{\mathrm{Li}_{\gamma}\big(u(w)\big)}{\zeta(\gamma-1)}
\end{align*}

Similarly we can compute the average number of offsprings produced by a toppling event~\eqref{eq:muf}
\begin{align*}
\mu(f)
&= \sum_{k \ge 0} k q_k (f)\\
&= \sum_{k=0}^{\infty} k
   \sum_{m=k}^{\infty}
   \binom{m}{k} q_m (1-f)^k f^{m-k} \\[6pt]
&= \sum_{k=0}^{\infty} k
   \sum_{m=k}^{\infty}
   \binom{m}{k}
   \frac{m^{-\gamma}}{\zeta(\gamma-1)}
   (1-f)^k f^{m-k} \\[6pt]
&= \sum_{m=1}^{\infty}
   \frac{m^{-\gamma}}{\zeta(\gamma-1)}
   \sum_{k=1}^{m}
   k \binom{m}{k}
   (1-f)^k f^{m-k} \\[6pt]
&= \sum_{m=1}^{\infty}
   \frac{m^{-\gamma}}{\zeta(\gamma-1)}
   m(1-f+f)^{m-1}  \\[6pt]
&= \sum_{m=1}^{\infty}
   \frac{m^{-\gamma+1}}{\zeta(\gamma-1)}
   (1-f)\\
   \mu(f)&=1-f
\qquad \text{with } f>0  
\end{align*}

\paragraph{Erd\H{o}s--R\'enyi Case with $\lambda>1$}

The Erd\H{o}s-R\'enyi (ER) random graph exhibits the probability degree distribution 
\begin{equation*}
    p_d(k) = e^{-\lambda} \frac{\lambda^k}{k!}\, ,
\end{equation*}
where $\lambda = p(N-1)$, $p\in (0,1)$ is the probability that each pair of nodes is connected to form a link, and $N$ is the number of nodes, assumed to be large. We can thus obtain
\begin{equation*}
    q_k
     = \frac{\lambda^{k-1} e^{-\lambda}}{k!},
    \qquad k \ge 1\, ,
\end{equation*}
and
\begin{equation*}
 q_0
 = 1 - \frac{1}{\lambda} + \frac{e^{-\lambda}}{\lambda}\, ,
\end{equation*}
and eventually
\begin{align*}
Q(w\mid f )
&= \sum_{k \ge 0} q_k(f) w^k \\
&= 1 - \sum_{k=1}^{\infty} q_k
  \\& + \sum_{k=0}^{\infty} \sum_{m=k}^{\infty}
   \binom{m}{k}
   q_m
   (1-f)^k f^{m-k} \, w^k \\
   &= 1 - \frac{1}{\lambda}
\\&   + \frac{1}{\lambda}
       \sum_{k=0}^{\infty} \sum_{m=k}^{\infty}
   \binom{m}{k}
   \frac{\lambda^m e^{-\lambda}}{m!}
   (1-f)^k f^{m-k} \, w^k \\ 
&= 1 - \frac{1}{\lambda}
 \\&  + \frac{1}{\lambda}
   \sum_{m=0}^{\infty}
   \frac{\lambda^m e^{-\lambda}}{m!}
   \sum_{k=0}^{m}
   \binom{m}{k}
   (1-f)^k f^{m-k} w^k  \\
&= 1 - \frac{1}{\lambda}
 \\&  + \frac{1}{\lambda}
   \sum_{m=0}^{\infty}
   \frac{\lambda^m e^{-\lambda}}{m!}
   \big(f + (1-f)w\big)^m \\
   &= 1 - \frac{1}{\lambda}
 \\&  + \frac{1}{\lambda}
   \sum_{m=0}^{\infty}
   \frac{\lambda^m e^{-\lambda}}{m!}
   \big(u(w)\big)^m \\
   Q(w\mid f)&
= 1 - \frac{1}{\lambda}
   + \frac{e^{\lambda (u(w)-1)}}{\lambda}\\
\end{align*}

An analogous computation allows to get
\begin{align*}
\mu(f)
&= \sum_{k \ge 0} k\, q_k(f) \\
&= \frac{1}{\lambda}\sum_{k=0}^{\infty} \sum_{m=k}^{\infty}
k \binom{m}{k}
\frac{\lambda^m e^{-\lambda}}{m!}
(1-f)^k f^{m-k} \\
&= \frac{1}{\lambda}\sum_{k=0}^{\infty}k\frac{\lambda^k e^{-\lambda}}{k!}
(1-f)^k\sum_{m=k}^{\infty}
\frac{\lambda^{m-k} }{(m-k)!}
f^{m-k} \\
&= \frac{e^{\lambda f}}{\lambda}\sum_{k=0}^{\infty}k\frac{\lambda^k e^{-\lambda}}{k!}
(1-f)^k\\
&= \frac{e^{-\lambda(1- f)}}{\lambda}\sum_{k=0}^{\infty}k\frac{\lambda^k} {k!}
(1-f)^k\\
&= \frac{1}{\lambda}
\lambda(1-f)\\
\mu(f)&=1-f
\qquad \text{with } f>0  
\end{align*}

The root of Eq.~\eqref{eq:wstarER} involves the Lambert W-function and thus it cannot be computed explicitly. In Fig.~\eqref{Er} we report the dependence of the root $w^*$ on the dissipation $f$.
\begin{figure}[H]
    \centering
    \includegraphics[width=0.95\linewidth]{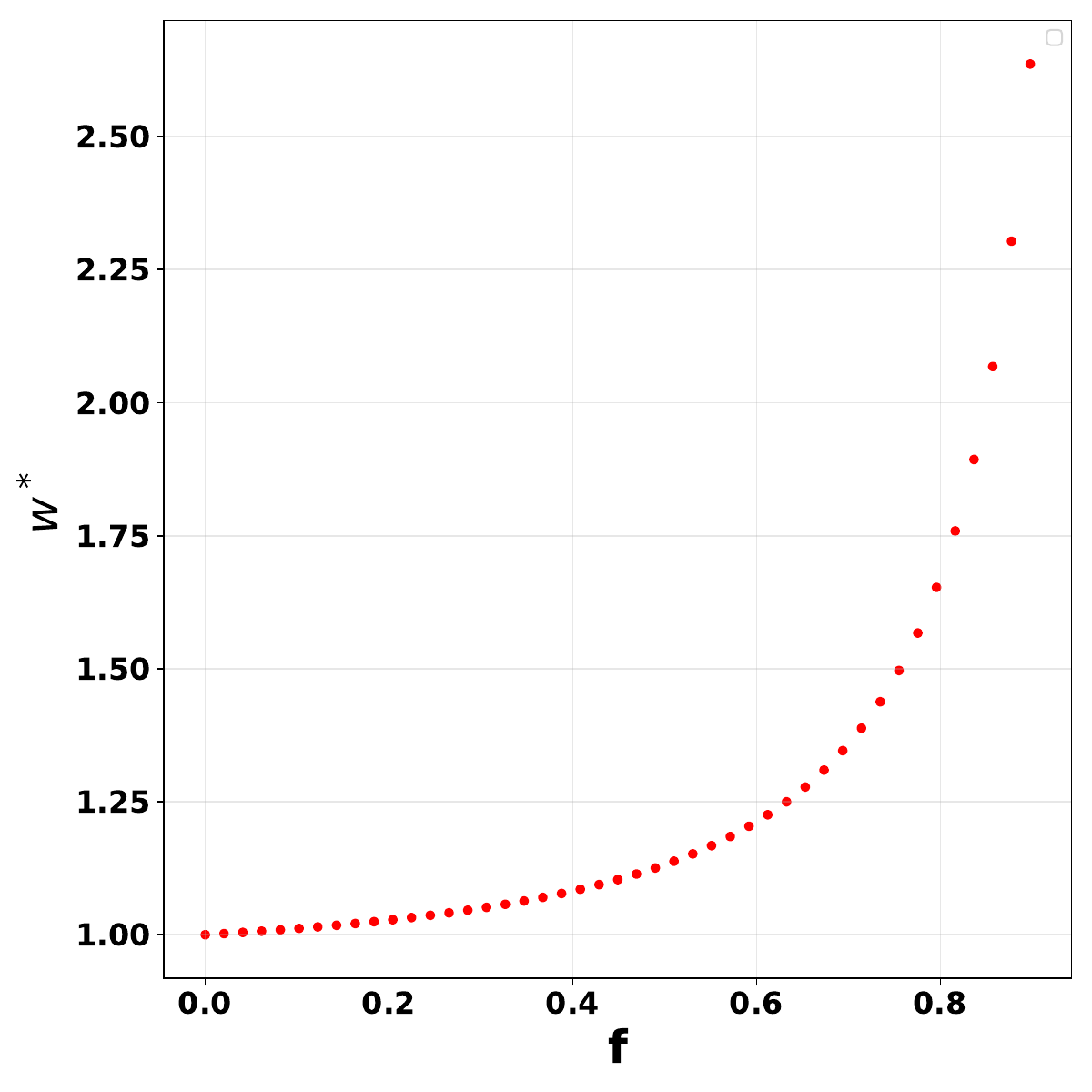}
    \caption{We show the root $w^*$ as a function of the dissipation $f$ in the Erd\H{o}s--R\'enyi case with $\lambda=10$.}
    \label{Er}
\end{figure}

In Fig.~\ref{sp_Er} we report the avalanche size probability distribution $p(s)$ for different values of the probability $p$ to have a link among two nodes.
\begin{figure}[H]
    \centering
    \includegraphics[width=0.95\linewidth]{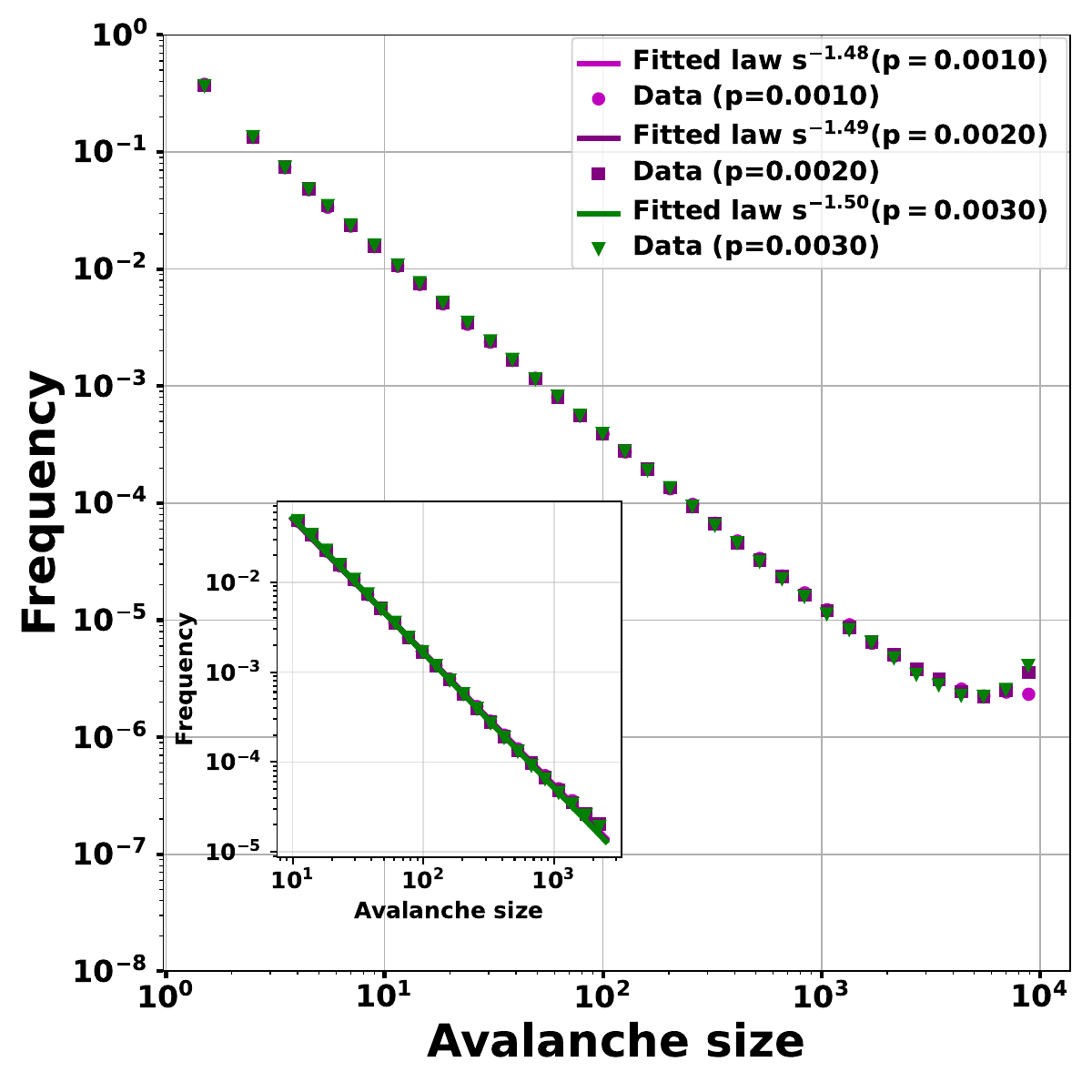}
   \caption{Distributions of avalanche sizes obtained with the sandpile model on different Erd\H{o}s-R\'enyi graphs. The dissipation probability is $5\times 10^{-3}$ in each case. The data are grouped into logarithmically spaced bins before being displayed on a log–log plot. The fitting range is $[10,2500].$ The goodness-of-fit diagnostics satisfy
$3.92\times10^{-3}\leq K_s\leq4.81\times10^{-3}$,
$7.63\times10^{-3}\leq V\leq9.23\times10^{-3}$,
$1.61\times10^{-3}\leq A^2\leq2.06\times10^{-3}$,
and
$0.00206\leq|\kappa-1| \leq0.00258$ .}
   \label{sp_Er}
\end{figure}
\paragraph{Random regular Case with $k_0>2$}

The degree distribution for a random $k_0$-regular graph~\cite{Mckaywormald1990} is simply given by 
\begin{equation*}
    p_d(k) = \delta_{k,k_0}\,,
\end{equation*}
hence
\begin{eqnarray*}
 q_{k_0} &=& \frac{1}{k_0}\\
 q_k &=& 0 \text{ if $k \ge 1$ $\neq k_0$}\\
   q_0 &=& \frac{k_0 - 1}{k_0}\, .
\end{eqnarray*}
Hence the associated generating function is
\begin{align*}
Q(w|f)=&\sum_{k=0}^{k_0}q_k(f) w^k\\
      =&1-q_{k_0}\\&+\sum_{k=0}^{k_0}q_{k_0}\binom{k_0}{k}(1-f)^k f^{k_0-k} w^k\\
      =&1-\frac{1}{k_0}+\frac{1}{k_0}\sum_{k=0}^{k_0} \binom{k_0}{k}(1-f)^k f^{k_0-k} w^k\\
      =&1-\frac{1}{k_0}+\frac{1}{k_0}\left((1-f)w +f\right)^{k_0}\\
 Q(w|f)  =&\frac{k_0-1}{k_0}+\frac{u(w)^{k_0}}{k_0} 
\end{align*}
with $u(w) = (1-f)w + f$. While the average offspring number is
\begin{align*}
\mu(f)&=\sum_{k \ge 0}k q_k(f)\\
&=\sum_{k \ge 0}
k \frac{1}{k_0}
\binom{k_0}{k}
(1-f)^k f^{k_0-k}\\
&=\frac{1}{k_0}
k_0 (1-f)\\
\mu(f)&=1-f
\qquad \text{with } f>0.
\end{align*}

It is not possible, in general, to analytically determine the root of Eq.~\eqref{eq:polyk0}
\begin{equation*}
(1-k_0)y^{k_0}+f y^{k_0-1}+k_0-1=0\, ,
\end{equation*}
one can however prove that there is a single positive root by using the Descartes rule of signs. In the following Fig.~\ref{Rr} we provide a numerical support to this claim in the case $k_0=10$.
  \begin{figure}[H]
    \centering
    \includegraphics[width=0.95\linewidth]{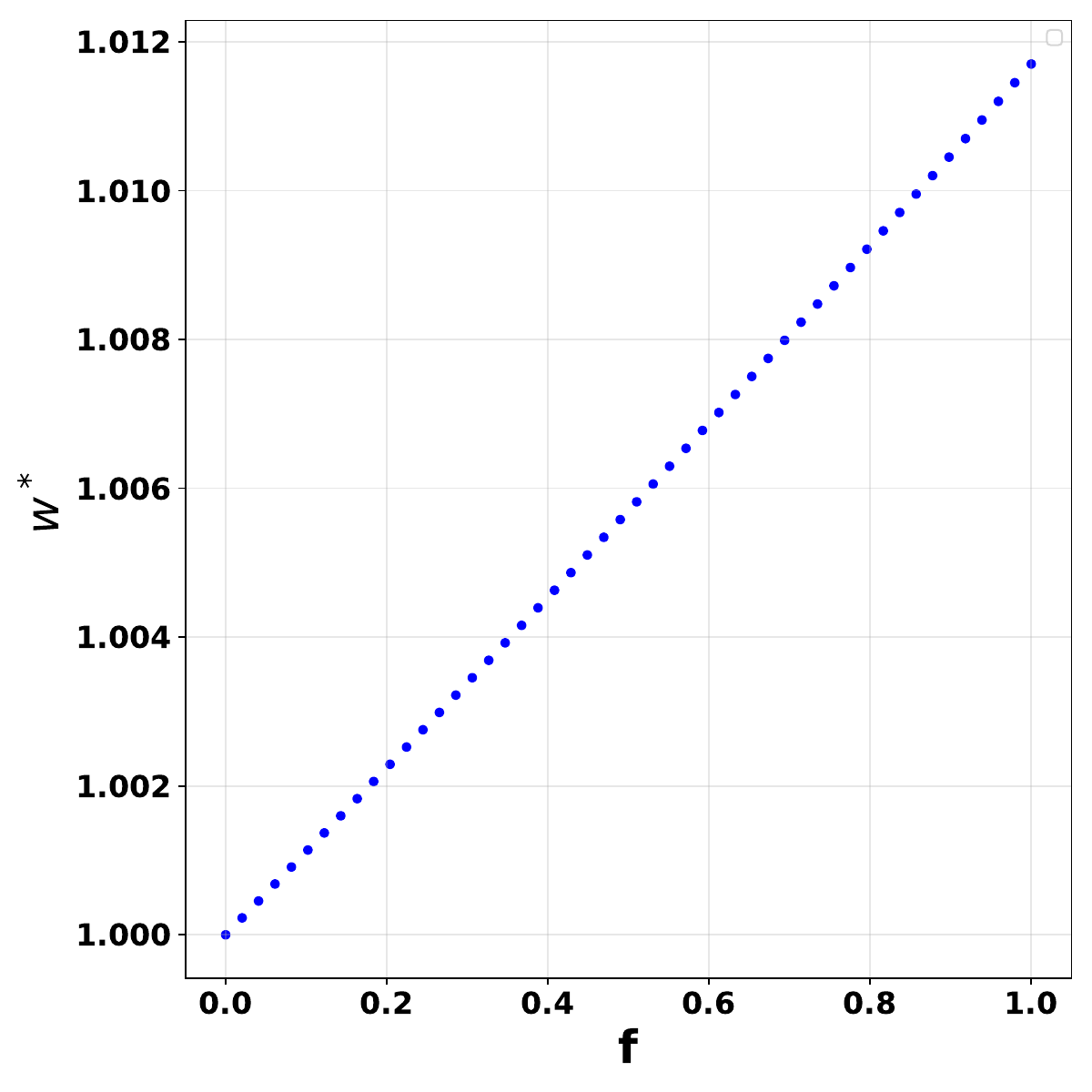}
    \caption{We show the root $w^*$ as a function of the dissipation probability $f$ in the random regular case with $k_0=10$.}
    \label{Rr}
\end{figure}

\section{Determination of  the asymptotic expansion of
$p(s)$ for large $s$.} \label{app4}

To analyze the singularity of $Q(w)$ at $w=1$, we use the Bose-Einstein type expansion \cite{robinson1951bose} valid for $s \in \mathbb{C} \setminus \mathbb{N}$
\begin{align*}
\mathrm{Li}_s(w)
&= \Gamma(1 - s)\, w^{s - 1}
+ \sum_{n=0}^{\infty} \frac{(-1)^n}{n!}\,\zeta(s - n)\, w^n\, .
\end{align*}
When $s$ is a positive integer, one has to handle the pole at $s=1$ separately. For $s \in \mathbb{N}$, the expansion can be written as
\begin{align}
\mathrm{Li}_s(w)
&= \lim_{m \to s} \Biggl\{
\Gamma(1 - m)\, w^{m - 1}
\\&+ \frac{(-1)^{s-1}}{(s-1)!}\, \zeta(m - s + 1)\, w^{s - 1}
\Biggr\}
\\&+ \sum_{n \neq s-1} \frac{(-1)^n}{n!}\, \zeta(s - n)\, w^n \notag \\
&= \lim_{m \to s} \Biggl\{
\Gamma(1 - m + s - s)\, w^{m - 1}
\\&+ \frac{(-1)^{s-1}}{(s-1)!}\, \zeta(m - s + 1)\, w^{s - 1}
\Biggr\}
\\&+ \sum_{n \neq s-1} \frac{(-1)^n}{n!}\, \zeta(s - n)\, w^n.
\label{app4eq1}
\end{align}
Set $\eta = m - s$ in \eqref{app4eq1}. Then
\begin{align*}
\mathrm{Li}_s(w)
&= \lim_{\eta \to 0} \Biggl\{
\Gamma(1 - s - \eta)\, w^{s - 1 + \eta}
\\&+ \frac{(-1)^{s-1}}{(s-1)!}\, \zeta(1+\eta)\, w^{s - 1}
\Biggr\}
\\&+ \sum_{n \neq s-1} \frac{(-1)^n}{n!}\, \zeta(s - n)\, w^n \\
&= \lim_{\eta \to 0} w^{s - 1} \Biggl\{
\Gamma(1 - s - \eta)\, w^{\eta}
\\&+ \frac{(-1)^{s-1}}{(s-1)!}\, \zeta(1+\eta)
\Biggr\}
\\&+ \sum_{n \neq s-1} \frac{(-1)^n}{n!}\, \zeta(s - n)\, w^n .
\end{align*}
We now expand the singular terms in $\eta$ as $\eta \to 0$. First,
\[
\zeta(1+\eta) = \frac{1}{\eta} + C + o(1),
\]
where $C$ is the Euler-Mascheroni constant. Next,
\[
w^{\eta} = e^{\eta \log w} = 1 + \eta \log w + o(\eta).
\]
For the gamma function we use
\[
\Gamma(1-s-\eta)
= \frac{(-1)^{s-1}}{(s-1)!}\left(-\frac{1}{\eta} - C + \sum_{n=1}^{s-1} \frac{1}{n} + o(1)\right).
\]
combining these expansions yields
\begin{equation}
\begin{aligned}
\Gamma(1-s-\eta)\, w^\eta
=
\frac{(-1)^{s-1}}{(s-1)!}
\Biggl(
&\frac{1}{\eta} - C + \sum_{n=1}^{s-1}\frac{1}{n} \\
&\quad - \log w + o(1)
\Biggr).
\end{aligned}
\end{equation}
After cancellation of the simple pole in $1/\eta$, we arrive at
\begin{align*}
\mathrm{Li}_s(w)
&= \frac{(-1)^{s-1}}{(s-1)!}\, w^{s - 1}
   \left(\sum_{n=1}^{s-1}\frac{1}{n} - \log w\right)
  \\& + \sum_{n \neq s-1} \frac{(-1)^n}{n!}\, \zeta(s - n)\, w^n \\
&= \frac{(-1)^{s-1}}{(s-1)!}\, w^{s - 1}\bigl(\psi(s) + C - \log w\bigr)
   \\&+ \sum_{n \neq s-1} \frac{(-1)^n}{n!}\, \zeta(s - n)\, w^n,
\end{align*}
where $\psi(s)$ is the digamma function. Specializing these expansions to $s = \gamma$ allows one to determine the singular behavior of $Q(w)$ near $w=1$. We eventually find.

\paragraph{Case $2 < \gamma < 3$:}
\begin{equation}
    Q(w)
    = \frac{\Gamma(1-\gamma)}{\zeta(\gamma-1)}(1-w)^{\gamma-1} + w
      + o\bigl((1-w)^{\gamma-1}\bigr).
\end{equation}

\paragraph{Case $\gamma = 3$.}
\begin{align}
    Q(w)
    &= 1 - \frac{\zeta(3)}{\zeta(2)}
       \nonumber+ \frac{1}{\zeta(2)}
         \Bigl[
           \bigl( 1+\tfrac{1}{2} - \log(1-w)\bigr)\tfrac{(1-w)^2}{2}
          \\& + \zeta(3) - \zeta(2)(1-w)
         \Bigr]
      \nonumber+ o\bigl((1-w)^2\bigr) \notag \\
    &= w + \frac{1}{\zeta(2)}
       \Bigl(\tfrac{3}{4} - \tfrac{1}{2}\log(1-w)\Bigr)(1-w)^2\notag \\
       &+ o\bigl((1-w)^2\bigr) \notag \\
    &= w - \frac{1}{2\zeta(2)}\log(1-w)\,(1-w)^2\notag\\
      & + o\bigl((1-w)^2 |\log(1-w)|\bigr).
\end{align}

\paragraph{Case $\gamma > 3$.}
In this case, the second moment of the degree distribution is finite and
\begin{align}
    Q(w) 
    &= 1 - \frac{\zeta(\gamma)}{\zeta(\gamma-1)}
       + \frac{1}{\zeta(\gamma-1)}\notag
         \Bigl[\zeta(\gamma) \\&- \zeta(\gamma-1)(1-w)
               + \frac{\zeta(\gamma-2)}{2}(1-w)^2\Bigr]\notag
     \\&  + o\bigl((1-w)^2\bigr) \notag\\
    &= w + \frac{\zeta(\gamma-2)}{2\,\zeta(\gamma-1)}(1-w)^2
       + o\bigl((1-w)^2\bigr).
\end{align}
To relate the singularity of $Q$ near $w=1$ to that of $P$ near $y=1$, we use \eqref{SEC1EQ5}. Write
\begin{align*}
   y= \frac{w}{Q(w)}= \frac{w}{w + (Q(w)-w)}. 
\end{align*}
Let us set $w = 1 - \varepsilon$ with $\varepsilon \downarrow 0$ and suppose that
\[
Q(w) = 1 - \varepsilon + b\,\varepsilon^{a} + o\bigl(\varepsilon^{a}\bigr),
\qquad a>1,
\]
for some constant $b \neq 0$. Then
\begin{align*}
y
  = \frac{1-\varepsilon}{1-\varepsilon + b\,\varepsilon^{a} + o(\varepsilon^a)}&\Longrightarrow y = 1 - b\,\varepsilon^{a} + o(\varepsilon^a), \\
 & \Longrightarrow  1-y
  \sim b\,\varepsilon^{a}, \\
&\Longrightarrow \varepsilon
  \sim \bigl(b^{-1}(1-y)\bigr)^{1/a}, \\
&\Longrightarrow w
  \sim 1 - \bigl(b^{-1}(1-y)\bigr)^{1/a}.
\end{align*}
Since $w = P(y)$, this yields
\[
P(y) \sim 1 - \bigl(b^{-1}(1-y)\bigr)^{1/a},
\qquad y \uparrow 1.
\]
For the marginal case $\gamma=3$, the singular part of $Q$ involves $(1-w)^2 \log(1-w)$ rather than a pure power. The inversion then naturally leads to the Lambert $W$-function \cite{Corless1996LambertW}.

\begin{proposition}[Asymptotics and near-closed form of $P(y)$ for a logarithmic $Q$]
Let $P$ solve
\[
P(y) = y\,Q(P(y)),
\]
where, as $\omega \uparrow 1$,
\begin{equation}\label{app4eq2}
\begin{aligned}
    Q(\omega)
=& \omega - \frac{1}{2\,\zeta(2)}(1-\omega)^2\log(1-\omega)
\\&+ o\bigl((1-\omega)^2|\log(1-\omega)|\bigr),
\end{aligned}
\end{equation}
with $\zeta(2)=\pi^2/6$ and natural logarithms. Set $x = 1-y$ and $u = 1-\omega$. Then, as $y \uparrow 1$,
\begin{align}
P(y)
&= 1 - \sqrt{\frac{-\,4\,\zeta(2)\,(1-y)}{W_{-1}\!\bigl(-4\,\zeta(2)\,(1-y)\bigr)}}\,, \label{app4eq3}\\[2mm]
P(y)
&= 1 - 2\sqrt{\zeta(2)}\,
        \sqrt{\frac{1-y}{\lvert \log(1-y)\rvert}}\,
        \bigl(1+o(1)\bigr), \label{app4eq4}
\end{align}
where $W_{-1}$ denotes the $(-1)$ branch of the Lambert $W$ function.
\end{proposition}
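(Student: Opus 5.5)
The plan is to invert the relation $y=\omega/Q(\omega)$ from \eqref{SEC1EQ5} near $\omega=1$, exactly as in the power-law cases treated just above, but with the logarithmic correction kept explicitly. With $x=1-y$ and $u=1-\omega$ (so that $P(y)=\omega=1-u$), we have
\[
x=1-\frac{\omega}{Q(\omega)}=\frac{Q(\omega)-\omega}{Q(\omega)} .
\]
By \eqref{app4eq2}, $Q(\omega)=1+O(u)$ and $Q(\omega)-\omega=-\frac{1}{2\zeta(2)}u^2\log u\,(1+o(1))$. Since $\log u<0$, this is positive, consistent with $y<1$. Therefore
\[
x=\frac{u^2\,|\log u|}{2\zeta(2)}\bigl(1+o(1)\bigr),\qquad u\downarrow 0 .
\]
The map $u\mapsto u^2|\log u|$ is increasing near $0$, so small $x$ corresponds to small $u$. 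This also justifies choosing the branch of $P$ with $P(y)\uparrow 1$, which is the one selected by positivity of the coefficients of $P$.

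Next I reduce the leading-order equation to Lambert form. Put $v=u^2$, so that $u^2\log u=\tfrac12 v\log v$. The leading relation becomes $v\log v=-4\zeta(2)\,x$. Writing $t=\log v$ gives $t\,e^{t}=-4\zeta(2)x$. For small $x$ the right-hand side lies in $(-1/e,0)$, and since $v\to0$ we need $t\to-\infty$. This forces the lower branch, $t=W_{-1}(-4\zeta(2)x)$. Then $v=e^{t}=t e^{t}/t=-4\zeta(2)x/W_{-1}(-4\zeta(2)x)$. Taking square roots gives $u$, and $P=1-u$ is exactly \eqref{app4eq3}.

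For \eqref{app4eq4} I use the standard asymptotics $W_{-1}(z)=\log(-z)-\log(-\log(-z))+\dots$ as $z\uparrow0$. This gives $W_{-1}(-4\zeta(2)x)=\log x\,(1+o(1))=-|\log x|\,(1+o(1))$. Substituting into the previous formula yields $u=2\sqrt{\zeta(2)}\sqrt{x/|\log x|}\,(1+o(1))$.

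The step I expect to be the main obstacle is controlling how the $o(\cdot)$ remainders in \eqref{app4eq2} and in $Q(\omega)=1+O(u)$ propagate through the inversion. Consequently \eqref{app4eq3} should be read as an asymptotic equality, $u=u_{\mathrm{Lambert}}(1+o(1))$, and not as an identity. To handle this I will write the exact relation as $v|\log v|=4\zeta(2)\,x\,(1+\delta)$ with $\delta\to0$. Applying the Lambert inversion to $x(1+\delta)$ instead of $x$ then changes $W_{-1}$ only by $o(|\log x|)$, because $W_{-1}$ is asymptotically logarithmic. As a result, $v$ changes only by a factor $1+o(1)$, and hence so does $u$. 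This monotonicity/perturbation argument then yields both displayed formulas with the stated error terms.
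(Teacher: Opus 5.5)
Your proposal is correct and follows essentially the same route as the paper: invert $y=\omega/Q(\omega)$ to obtain $x=u^2|\log u|\,(1+o(1))/(2\zeta(2))$, pass to $t=u^2$ to get $t\log t=-4\zeta(2)x$, solve on the $W_{-1}$ branch, and then expand with $W_{-1}(-\varepsilon)=\log\varepsilon-\log|\log\varepsilon|+o(1)$. Your $(1+\delta)$ perturbation step tightens the paper's move of simply ignoring the $o(1)$ in the inversion, and it correctly shows that the Lambert-$W$ formula for $P(y)$ holds only in the sense $1-P(y)=u_{\mathrm{Lambert}}(1+o(1))$, not as an exact identity.
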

\begin{proof}
Write $\omega = P(y)$ and introduce
\[
u = 1-\omega \downarrow 0,
\qquad x = 1-y \downarrow 0.
\]
From \eqref{app4eq2},
\[
Q(\omega)
= (1-u) - a\,u^2 \log u + o\bigl(u^2|\log u|\bigr),
\]
with $ a = \frac{1}{2\zeta(2)}$.
Using $P(y)=y\,Q(P(y))$ in the equivalent form $y=\omega/Q(\omega)$, we obtain
\begin{align}
 y &= \frac{1-u}{(1-u) - a\,u^2\log u + o(u^2|\log u|)},\\
  &= 1 + \frac{a\,u^2\log u}{1-u-a\,u^2\log u}
    + o\bigl(u^2|\log u|\bigr).   
\end{align}
Since $u\to 0$ and $\log u < 0$, the denominator tends to $1$, so
\begin{align}\label{app4eq6}
x &= 1-y,\\ &= -a\,u^2\log u + o\bigl(u^2|\log u|\bigr),\\
 &= a\,u^2|\log u|\,(1+o(1)).
\end{align}
Set $t = u^2 \in (0,1)$. Then $|\log u| = \tfrac12|\log t|$, and \eqref{app4eq6} becomes
\begin{align}
 x& = \frac{a}{2} t\,|\log t|\,(1+o(1)),\\
  &= \frac{1}{4\zeta(2)}\,t\,|\log t|\,(1+o(1)).   
\end{align}
Since $t\in(0,1)$, we have $|\log t| = -\log t$, so
\[
-4\zeta(2)\,x = t \log t\,(1+o(1)).
\]
Ignoring the lower-order $o(1)$ in the inversion (which does not affect the leading behavior nor the Lambert–$W$ representation), we solve
\[
t \log t = c, \qquad c = -4\zeta(2)\,x \in (-e^{-1},0).
\]
The solution is
\[
t = \frac{c}{W_{-1}(c)},
\]
where the branch $W_{-1}$ ensures $t \to 0^+$ as $c \to 0^-$. Thus,
\[
t = \frac{-4\zeta(2)\,x}{W_{-1}\!\bigl(-4\zeta(2)\,x\bigr)},
\qquad
u = \sqrt{t}, \qquad
P(y) = 1-u,
\]
which yields the near-closed form \eqref{app4eq3}.
For the explicit asymptotics, use
\[
W_{-1}(-\varepsilon)
 = \log \varepsilon - \log|\log \varepsilon| + o(1),
\qquad \varepsilon \downarrow 0.
\]
This gives
\begin{align*}
  u^2 &= \frac{4\zeta(2)\,x}{|\log x|}\,(1+o(1)),
\qquad\\  
u& = 2\sqrt{\zeta(2)}\,\sqrt{\frac{x}{|\log x|}}\,(1+o(1)),
\end{align*}

and \eqref{app4eq4} follows upon substituting $x = 1-y$.
\end{proof}

\begin{remark}
Subleading regular terms of order $(1-\omega)^2$ in $Q(\omega)$ only modify the argument of $W_{-1}$ by a constant multiplicative factor (e.g.\ an $e^{-3}$), and do not alter the leading square-root behavior
\[
P(y) \sim 1 - 2\sqrt{\zeta(2)} 
\sqrt{\frac{1-y}{|\log(1-y)|}} \quad (y \uparrow 1),
\]
nor the associated avalanche exponent.
\end{remark}
From the above analysis, one finally obtains the leading behavior of $P(y)$ as $y\uparrow 1$
\begin{equation}
    P(y) \approx
    \begin{cases}
        1 - \displaystyle
        \biggl[\dfrac{\zeta(\gamma-1)}{\Gamma(1-\gamma)}
        (1-y)\biggr]^{\frac{1}{(\gamma-1)}}, & 2 < \gamma < 3, \\[8pt]
        1 - 2\sqrt{\zeta(2)}\,
        \sqrt{\dfrac{1-y}{|\log(1-y)|}}, & \gamma = 3, \\[8pt]
        1 - \displaystyle
        \biggl[\dfrac{2\,\zeta(\gamma-1)}{\zeta(\gamma-2)}\biggr]^{\!1/2}
        (1-y)^{1/2}, & \gamma > 3.
    \end{cases}
\end{equation}
Standard singularity analysis~\cite{flajolet1988singularity} applied to \eqref{SEC1EQ6} then yields the asymptotics of the avalanche size distribution $p(s)$
\begin{equation}
p(s) \sim
\begin{cases}
a(\gamma)\, s^{-\gamma/(\gamma-1)}, & 2 < \gamma < 3, \\[1ex]
b\, s^{-3/2} \bigl[\ln s\bigr]^{-1/2}, & \gamma = 3, \\[1ex]
c(\gamma)\, s^{-3/2}, & \gamma > 3,
\end{cases}
\qquad (s \to \infty),
\end{equation}
where $b=\sqrt{\pi/6}$, $c(\gamma)=\sqrt{\zeta(\gamma-1)/(2\pi\zeta(\gamma-2))}$ and   $a(\gamma)=-((\Gamma(1-\gamma)/\zeta(\gamma-1))^{1/1-\gamma})/(\Gamma(1/(1-\gamma)))$  are positive constants.

\section{Computation of the approximated expression~\ref{SEC3EQ2}}
\label{app5}

In Section~\ref{ssec:SF} we have determined an analytical expression for the avalanche size distribution $p(s)$ given by~\eqref{eq:psSFexact} 
\begin{align*}
p(s)
&=
\frac{(1-f)^{s-1}}{s}
\sum_{m=0}^{s}\binom{s}{m}A^{s-m}B^m\times\\
&\qquad \sum_{k_1,\dots,k_m\ge 1}
\frac{\binom{K}{s-1}f^{K-s+1}}{k_1^\gamma\cdots k_m^\gamma}
\mathbf{1}_{\{K\ge s-1\}}\, ,
\end{align*}
where 
\[
A=1-\frac{\zeta(\gamma)}{\zeta(\gamma-1)}\,\,,\,
B=\frac{1}{\zeta(\gamma-1)}\text{ and }
L(w)=\mathrm{Li}_{\gamma}(u(w)) \, .
\]
In the limit of large $K=k_1+\cdots+k_m$, the latter expression can be approximated by~\eqref{SEC3EQ2}
\begin{equation*}
p(s)
\approx
\frac{1}{\zeta(\gamma-1)}
\frac{(1-f)^{\gamma-1}}{f^\gamma}
s^{-\gamma}\, .
 \end{equation*}

The aim of this Section is to prove the above claim. Let us thus define
\[
C_m(K)=
\sum_{\substack{k_1+\cdots+k_m=K\\ k_i\ge 1}}
\frac{1}{k_1^\gamma\cdots k_m^\gamma}.
\]
We identify the binomial factor
\[
\binom{K}{s-1}(1-f)^{s-1}f^{K-s+1}
=
\mathbb P(X=s-1),
\]
with $X\sim \mathrm{Bin}(K,1-f).
$
For large \(K\), we approximate it by a Poisson law with
\[
\lambda=K(1-f),
\]
so that
\[
\binom{K}{s-1}(1-f)^{s-1}f^{K-s+1}
\approx
e^{-K(1-f)}
\frac{[K(1-f)]^{s-1}}{(s-1)!}.
\]
Using Stirling's formula,
\[
(s-1)!\sim \sqrt{2\pi(s-1)}
\left(\frac{s-1}{e}\right)^{s-1},
\]
we get the Gaussian approximation
\begin{align*}
  e^{-K(1-f)}
\frac{[K(1-f)]^{s-1}}{(s-1)!}
&\approx
\frac{1}{\sqrt{2\pi(s-1)}}\times\\
&\qquad\exp\left[
-\frac{(K(1-f)-(s-1))^2}{2(s-1)}
\right].
\end{align*}

Replacing \(s-1\) by \(s\), this becomes
\[
\frac{1}{\sqrt{2\pi s}}
\exp\left[
-\frac{(K(1-f)-s)^2}{2s}
\right].
\]
Hence
\begin{align*}
  p(s)&\approx
\frac{1}{s}
\sum_{m=0}^{s}
\binom{s}{m}A^{s-m}B^m
\sum_K
C_m(K)\times\\&\qquad
\frac{1}{\sqrt{2\pi s}}
\exp\left[
-\frac{(K(1-f)-s)^2}{2s}
\right].  
\end{align*}
The Gaussian is concentrated around
\[
K_\star=\frac{s}{1-f}.
\]
Therefore,
\begin{align*}
\frac{1}{1-f}
C_m\left(\frac{s}{1-f}\right)&\approx 
  \sum_K C_m(K)
\frac{1}{\sqrt{2\pi s}}\times \\&\qquad
\exp\left[
-\frac{(K(1-f)-s)^2}{2s}
\right].
\end{align*}

Thus
\[
p(s)\approx
\frac{1}{s(1-f)}
\sum_{m=1}^{s}
\binom{s}{m}A^{s-m}B^m
C_m\left(\frac{s}{1-f}\right).
\]
Now we approximate \(C_m(K)\). Since the variables satisfy \(k_i\ge 1\), when one variable is large and the others are small, the large one is shifted:
\[
k_{\mathrm{large}}
\approx
K-\sum_{i=1}^{m-1}k_i.
\]
With
\[
\bar r
=
\frac{\sum_{r\ge 1} r r^{-\gamma}}
{\sum_{r\ge 1} r^{-\gamma}}
=
\frac{\zeta(\gamma-1)}{\zeta(\gamma)},
\]
we use
\[
C_m(K)
\approx
m\zeta(\gamma)^{m-1}
\left(K-(m-1)\bar r\right)^{-\gamma}.
\]
Therefore
\[
C_m\left(\frac{s}{1-f}\right)
\approx
m\zeta(\gamma)^{m-1}
\left(
\frac{s}{1-f}-(m-1)\bar r
\right)^{-\gamma}.
\]
So
\begin{align*}
  p(s)&\approx
\frac{1}{s(1-f)}
\sum_{m=1}^{s}
\binom{s}{m}
A^{s-m}B^m
m\zeta(\gamma)^{m-1}\times\\&\qquad
\left(
\frac{s}{1-f}-(m-1)\bar r
\right)^{-\gamma}.  
\end{align*}
Let
\[
q=B\zeta(\gamma)
=
\frac{\zeta(\gamma)}{\zeta(\gamma-1)}.
\]
Since
\[
A+B\zeta(\gamma)=1,
\]
the weights
\[
\binom{s}{m}A^{s-m}
(B\zeta(\gamma))^m
\]
are binomial weights with
\[
M\sim \mathrm{Bin}(s,q).
\]
Hence the sum can be written approximately as
\[
S
=
\sum_{m=1}^{s}
\binom{s}{m}
A^{s-m}B^m
m\zeta(\gamma)^{m-1}
\left(
\frac{s}{1-f}-(m-1)\bar r
\right)^{-\gamma}.
\]
Equivalently,
\[
S
=
\frac{1}{\zeta(\gamma)}
\mathbb E\left[
M
\left(
\frac{s}{1-f}-(M-1)\bar r
\right)^{-\gamma}
\right].
\]
For large \(s\),
\[
M\approx sq.
\]
Thus
\[
S
\approx
\frac{sq}{\zeta(\gamma)}
\left(
\frac{s}{1-f}-(sq-1)\bar r
\right)^{-\gamma}.
\]
But
\[
\bar r q
=
\frac{\zeta(\gamma-1)}{\zeta(\gamma)}
\frac{\zeta(\gamma)}{\zeta(\gamma-1)}
=1.
\]
Therefore
\[
\frac{s}{1-f}-sq\bar r
=
\frac{s}{1-f}-s
=
s\frac{f}{1-f}.
\]
Neglecting the lower-order \(+\bar r\) correction, we obtain
\[
S
\approx
\frac{sq}{\zeta(\gamma)}
\left(
s\frac{f}{1-f}
\right)^{-\gamma}.
\]
Since
\[
\frac{q}{\zeta(\gamma)}
=
\frac{1}{\zeta(\gamma-1)}
=
B,
\]
we get
\[
S
\approx
B s^{1-\gamma}
\left(\frac{1-f}{f}\right)^\gamma.
\]

Finally,
\[
p(s)
\approx
\frac{1}{s(1-f)}
B s^{1-\gamma}
\left(\frac{1-f}{f}\right)^\gamma.
\]

Hence
\[
p(s)
\approx
\frac{1}{\zeta(\gamma-1)}
\frac{(1-f)^{\gamma-1}}{f^\gamma}
s^{-\gamma}.
\]

In Fig.~\ref{Br1} we report numerical results to support the previous computation and in Fig.~\ref{sp_Sf_All}, we report the transition from $\tau=\gamma/\gamma-1$ to $\tau=\gamma$ for scale-free networks of different sizes with fixed degree exponent $\gamma=3$.

\begin{figure}
    \centering
    \includegraphics[width=0.95\linewidth]{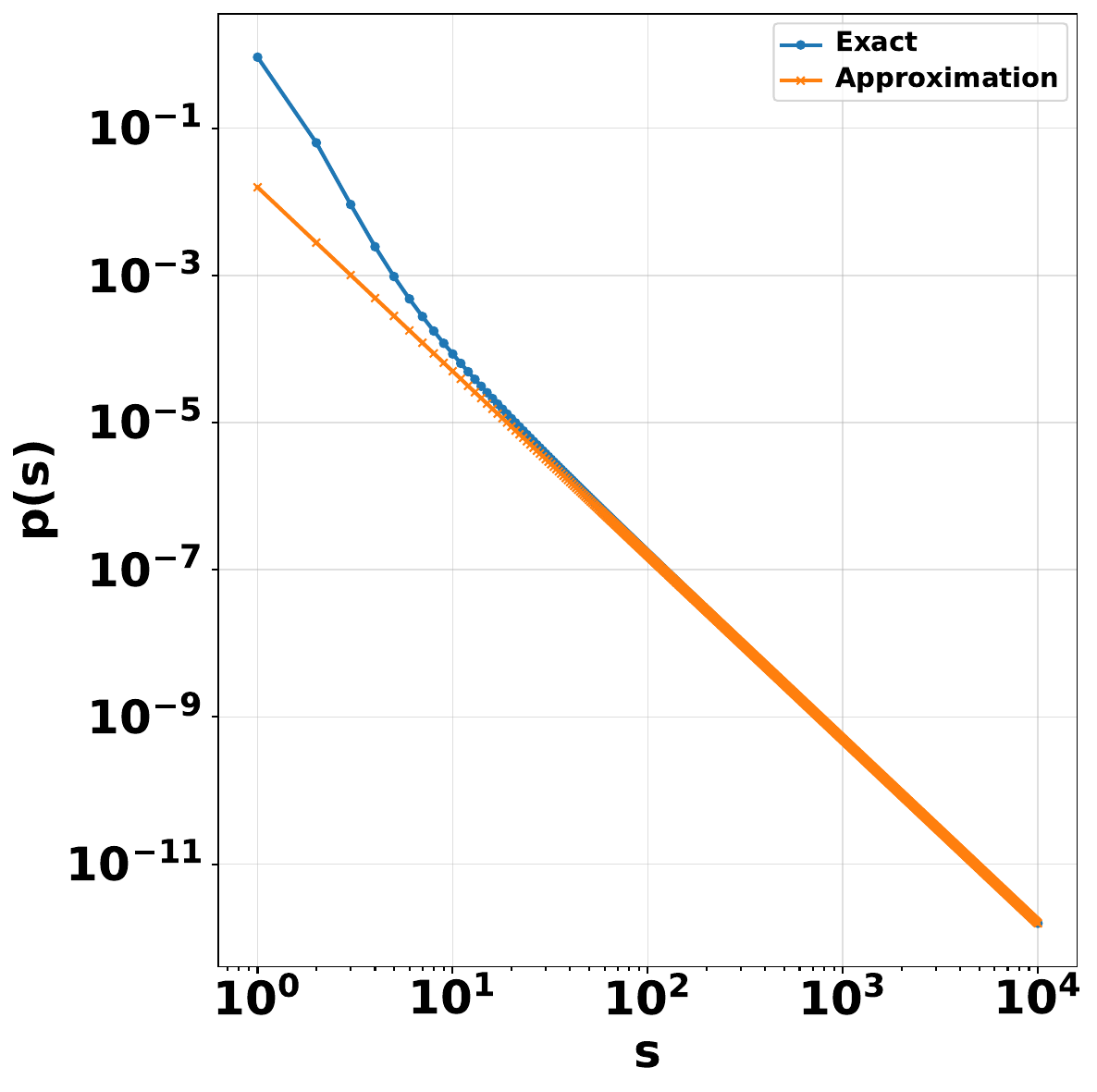}
    \caption{Comparison between the exact distribution $p(s)$ given by~\eqref{eq:psSFexact} and the approximated one given by~\ref{SEC3EQ2} for $\gamma=2.5$ and $f=0.9$.}
    \label{Br1}
\end{figure}

\begin{figure}
    \centering
    \includegraphics[width=0.95\linewidth]{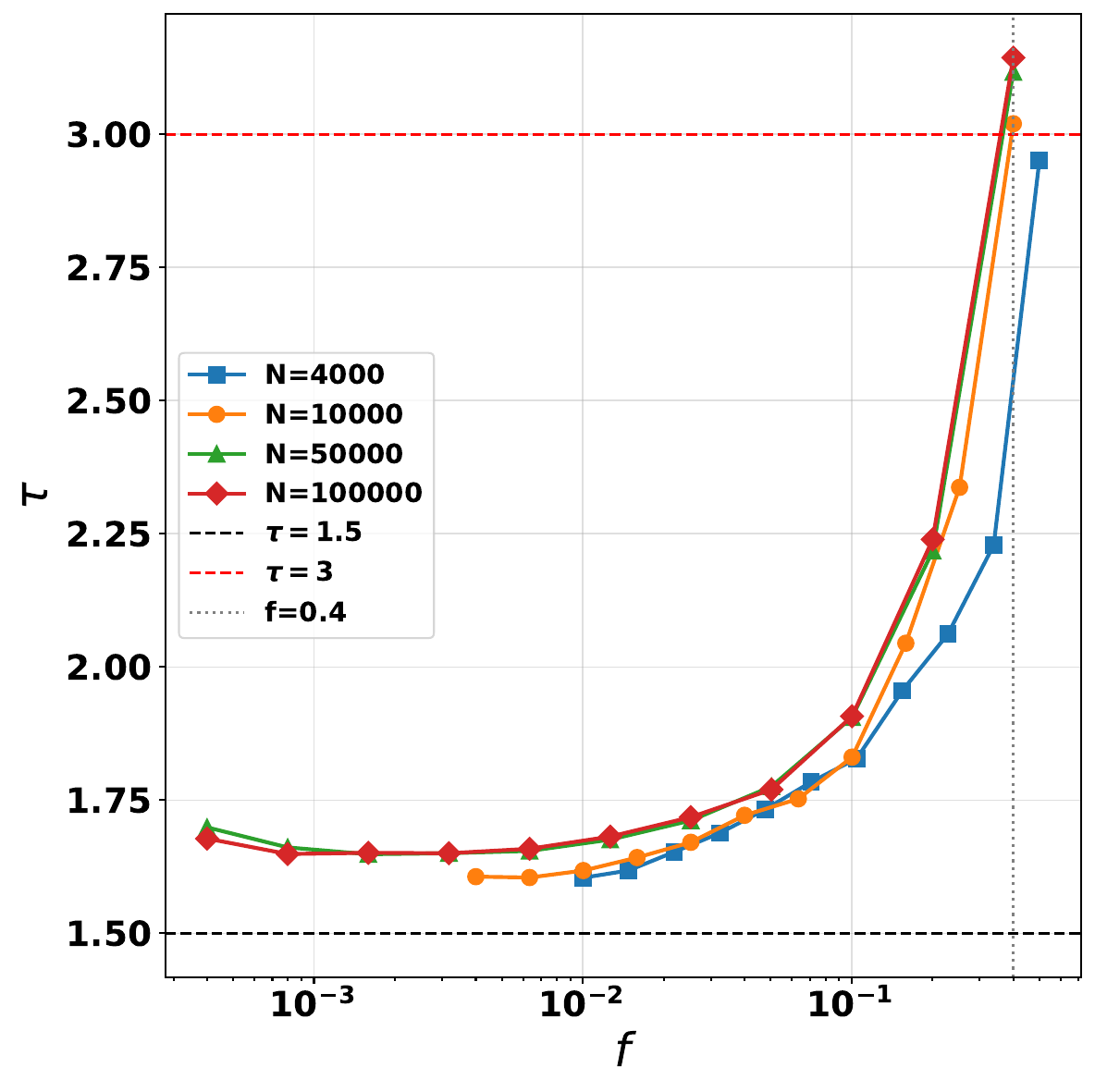}
    \caption{Evolution of the fitted exponent $\tau$ as a function of the dissipation parameter $f$ for scale-free networks of different sizes with fixed degree exponent $\gamma=3$. For small $f$, $\tau$ remains close to the value $\gamma/(\gamma-1)=1.5$, while for larger $f$ it approaches $\gamma=3$. The vertical dotted line indicates the crossover region around $f\simeq 0.4$.
}
    \label{sp_Sf_All}
\end{figure}

\section{Determination of the exact $p(s)$ with the Theorem~\ref{thm:lagrange} in the Erd\H{o}s--R\'enyi Network and random regular cases}\label{app6}

The aim of this Section is to provide a complementary proof of the avalanche size distribution in the case of Erd\H{o}s--R\'enyi and regular random graph, based on Theorem~\ref{thm:lagrange}, and so differing from the one presented in the main text.

In the Erd\H{o}s--R\'enyi case we have,
\[
Q(w|f)=\frac{\lambda-1}{\lambda}+\frac{1}{\lambda}\exp\!\big(\lambda(u(w)-1)\big).
\]
For convenience set
\[
C:=\frac{\lambda-1}{\lambda},\qquad D(w):=\frac{1}{\lambda}\exp\!\big(\lambda(u(w)-1)\big),
\]
so that $Q(w|f)=C+D(w)$.
We can write
\[
Q(w|f)^s=\sum_{t=0}^s \binom{s}{t}\,C^{\,s-t}\,D(w)^t.
\]
We have
\[
D(w)^t=\Big(\frac{1}{\lambda}\Big)^t
\exp\!\big(t\lambda(u(w)-1)\big).
\]
Since $u(w)=(1-f)w+f$,
\[
u(w)-1=(1-f)(w-1),
\]
hence
\begin{align*}
 \exp\!\big(t\lambda(u(w)-1)\big)
&=\exp\!\big(t\lambda(1-f)(w-1)\big)\\
&=e^{-t\lambda(1-f)}\,\exp\!\big(t\lambda(1-f)\,w\big).   
\end{align*}
Therefore
\[
D(w)^t=\Big(\frac{1}{\lambda}\Big)^t e^{-t\lambda(1-f)}\,
\exp\!\big(t\lambda(1-f)\,w\big).
\]
Using $\exp(aw)=\sum_{n\ge0}\frac{a^n}{n!}w^n$, we get
\[
[w^{s-1}]\,\exp\!\big(t\lambda(1-f)\,w\big)=\frac{\big(t\lambda(1-f)\big)^{s-1}}{(s-1)!}.
\]
Hence
\[
[w^{s-1}]\,D(w)^t
=\Big(\frac{1}{\lambda}\Big)^t e^{-t\lambda(1-f)}\,
\frac{\big(t\lambda(1-f)\big)^{s-1}}{(s-1)!}.
\]
By using the fact that $p(s)=\frac1s[w^{s-1}]Q(w|f)^s$, we get
\[
p(s)=\frac1s\sum_{t=0}^s \binom{s}{t}\,C^{\,s-t}\,
\Big(\frac{1}{\lambda}\Big)^t e^{-t\lambda(1-f)}\,
\frac{\big(t\lambda(1-f)\big)^{s-1}}{(s-1)!}.
\]
Since $\frac{1}{s}\cdot\frac{1}{(s-1)!}=\frac{1}{s!}$, this can be written compactly as
\begin{equation*}
 p(s)=\frac{1}{s!}\sum_{t=0}^s \binom{s}{t}\,C^{\,s-t}\,
\Big(\frac{1}{\lambda}\Big)^t e^{-t\lambda(1-f)}\,
\big(t\lambda(1-f)\big)^{s-1},
\end{equation*}
with $ C=\frac{\lambda-1}{\lambda}.$\\

Now in the random regular case,
\[
Q(w|f)=a+b\,u(w)^{k_0},
\qquad
u(w)=(1-f)w+f,\] with
$a=\frac{k_0-1}{k_0}$,
$b=\frac{1}{k_0}$.
Thus
\[
Q(w|f)^s
=
\left(a+bu(w)^{k_0}\right)^s
=
\sum_{m=0}^{s}\binom{s}{m}
a^{s-m}\bigl(bu(w)^{k_0}\bigr)^m.
\]
So
\[
Q(w|f)^s
=
\sum_{m=0}^{s}\binom{s}{m}
a^{s-m}b^m u(w)^{k_0 m}.
\]
\begin{align*}
 u(w)^{k_0 m}
&=
\bigl((1-f)w+f\bigr)^{k_0 m}\\
&=
\sum_{n=0}^{k_0 m}\binom{k_0 m}{n}(1-f)^n f^{k_0 m-n}w^n.   
\end{align*}
Therefore
\[
[w^{s-1}]u(w)^{k_0 m}
=
\binom{k_0 m}{s-1}(1-f)^{s-1}f^{k_0 m-(s-1)},
\]
provided $k_0 m\ge s-1$.
\begin{align*}
p(s)
&=
\frac{1}{s}[w^{s-1}]Q(w|f)^s\\
&=
\frac{1}{s}
\sum_{m=0}^{s}\binom{s}{m}
a^{s-m}b^m
[w^{s-1}]u(w)^{k_0 m}.
\end{align*}
Thus
\begin{align*}
 p(s)
&=
\frac{(1-f)^{s-1}}{s}
\sum_{m=0}^{s}
\binom{s}{m}
a^{s-m}b^m\times\\ &\quad
\binom{k_0 m}{s-1}
f^{k_0 m-s+1}
\mathbf{1}_{\{k_0 m\ge s-1\}}.   
\end{align*}

\section{Goodness-of-fit diagnostics.}
\label{app7}
We recall the goodness-of-fit diagnostics reported in the figure captions.
Let \(C_d\) denote the empirical cumulative distribution function of the
avalanche-size data restricted to the fitting range, and let \(C_t\) denote the
cumulative distribution function of the fitted theoretical model. Both
functions are evaluated on the same set of observed avalanche sizes
\(\{x_i\}_{i=1}^M\). We define
\[
\Delta_i = C_d(x_i)-C_t(x_i),
\qquad i=1,\ldots,M .
\]

The Kolmogorov--Smirnov statistic is
\[
KS = D
=
\max\left\{
\max_{1\le i\le M}\Delta_i,
-\min_{1\le i\le M}\Delta_i
\right\}.
\]
It measures the largest vertical discrepancy between the empirical and fitted
cumulative distributions.

The Kuiper statistic is
\[
V
=
\max_{1\le i\le M}\Delta_i
-
\min_{1\le i\le M}\Delta_i .
\]
Unlike the Kolmogorov--Smirnov distance, it combines the largest positive and
negative deviations and is therefore sensitive to discrepancies over the whole
support.

The Anderson--Darling statistic used here is
\[
A^2
=
\sum_{i=1}^{M}
\frac{\left(C_d(x_i)-C_t(x_i)\right)^2}
{C_t(x_i)\left(1-C_t(x_i)\right)} .
\]
This diagnostic gives more weight to deviations in the tails of the
distribution.

Finally, the normalized coefficient \(\kappa\) is defined by
\[
\kappa
=
1+
\frac{1}{M}
\sum_{i=1}^{M}
\left(C_d(x_i)-C_t(x_i)\right).
\]
Thus,
\[
|\kappa-1|
=
\left|
\frac{1}{M}
\sum_{i=1}^{M}
\left(C_d(x_i)-C_t(x_i)\right)
\right|.
\]
Small values of \(KS\), \(V\), \(A^2\), and \(|\kappa-1|\) indicate a close
agreement between the empirical avalanche-size distribution and the fitted
model. Equivalently, \(\kappa\) close to \(1\) indicates the absence of a
systematic average bias between the empirical and theoretical cumulative
distributions.

\end{document}